\newcommand{\myV}[1]{V({#1})}
\newcommand{\myE}[1]{E({#1})}
\newcommand{\myc}[1]{c({#1})}
\newcommand{\myI}[2]{{#1}[{#2}]}
\newcommand{\mygG}{G}
\newcommand{\mygH}{H}
\newcommand{\mygC}{C}
\newcommand{\myieq}[1]{{\equiv}_{#1}\,}
\newcommand{\myvex}{\exp\bigl(O(n)\bigr)}
\newcommand{\mydeln}{\backslash}
\newenvironment{comm}{\begin{list}{{[[}}{\setlength{\leftmargin}{3.75mm}}\item}{{]]}\end{list}\smallskip}
\newtheorem{Thm}{Theorem}
\newtheorem{Lem}[Thm]{Lemma}
\newtheorem{Cor}[Thm]{Corollary}
\theoremstyle{definition}
\newtheorem{Def}[Thm]{Definition}
\newenvironment{Proof}{\begin{proof}}{\end{proof}}
\begin{document}

%%%%%%%%%%%%%%%%%%%%%%%%%%%%%%%%%%%%%%%%%%%%%%%%%% Front matter information %%%

\hyphenation{University}

\title%
[]%
{Computing the Tutte polynomial\\ in vertex-exponential time}
\author[]{Andreas Bj\"{o}rklund$^*$}
\thanks{$^*$Lund University,
Department of Computer Science,
P.O.Box 118, SE-22100 Lund, Sweden. E-mail: {\tt
andreas.bjorklund@logipard.com}, {\tt thore.husfeldt@cs.lu.se}.}  
\author[]{Thore Husfeldt$^*$}
\author[]{Petteri Kaski$^\dagger$}
\thanks{$^\dagger$Helsinki Institute for Information Technology HIIT,
Department of Computer Science, University of Helsinki,
P.O.Box 68, FI-00014 University of Helsinki, Finland. 
E-mail: {\tt petteri.kaski@cs.helsinki.fi}, {\tt
mikko.koivisto@cs.helsinki.fi}. 
This research was supported in part by the Academy of
Finland, Grants 117499 (P.K.) and 109101 (M.K.)}
\author[]{Mikko Koivisto$^\dagger$}

%%%%%%%%%%%%%%%%%%%%%%%%%%%%%%%%%%%%%%%%%%%%%%%%%%%%%%%%%%%%%%%%%% Abstract %%%

\begin{abstract}
The deletion--contraction algorithm is perhaps
the most popular method for computing a host of fundamental
graph invariants such as the chromatic, flow, and reliability
polynomials in graph theory,
the Jones polynomial of an alternating link in knot theory,
and the partition functions of the models of Ising, Potts, and
Fortuin--Kasteleyn in statistical physics.
Prior to this work, deletion--contraction  was also the fastest known
general-purpose algorithm for these invariants, running in time
roughly proportional to the number of spanning trees in the input graph.

Here, we give a substantially faster algorithm that computes the Tutte
polynomial---and hence, all the aforementioned invariants and more---of an
arbitrary graph in time within a polynomial factor of the number of connected
vertex sets. The algorithm actually evaluates a multivariate generalization of
the Tutte polynomial by making use of an identity due to Fortuin and Kasteleyn.
We also provide a polynomial-space variant of the algorithm and
give an analogous result for Chung and Graham's cover polynomial.

An implementation of the algorithm outperforms
deletion--contraction also in practice.
\end{abstract}

%%%%%%%%%%%%%%%%%%%%%%%%%%%%%%%%%%%%%%%%%%%%%%%%%%%%%%%%%%%%% Document body %%%

\maketitle

\vspace*{-22pt}

\section{Introduction}

Tutte's motivation for studying what he called the 
``dichromatic polynomial'' was algorithmic. By his own entertaining 
account \cite{Tut04}, he was intrigued by the variety of graph 
invariants that could be computed with the deletion--contraction 
algorithm, and ``playing'' with it he discovered a bivariate 
polynomial that we can define as
\begin{equation}
\label{eq:tutte}
T_\mygG(x,y)= \sum_{F\subseteq E} (x-1)^{\myc F-\myc E}
(y-1)^{\myc F+|F|-|V|}\,.
\end{equation}
Here, $\mygG$ is a graph with vertex set $V$ and edge set $E$; 
by $\myc F$ we denote the number of connected 
components in the graph with vertex set $V$ and edge set $F$.
Later, Oxley and Welsh \cite{OW79} showed in their
celebrated Recipe Theorem that, in a very 
strong sense, the {\em Tutte polynomial} $T_\mygG$ is indeed the most 
general graph invariant that can be computed using deletion--contraction.

Since the 1980s it has become clear that this construction has deep
connections to many fields outside of computer science and algebraic
graph theory. It appears in various guises and specialisations in
enumerative combinatorics, statistical physics, knot theory and
network theory. It subsumes the chromatic, flow, and reliability
polynomials, the Jones polynomial of an alternating link, and, perhaps
most importantly, the models of Ising, Potts, and Fortuin--Kasteleyn,
which appear in tens of thousands of research papers. A number of
surveys written for various audiences present and explain these
specialisations \cite{Soka05, Welsh93, Wel99, WM99}.

Computing the Tutte polynomial has been a very fruitful
topic in theoretical computer science, resulting in seminal work on
the computational complexity of counting, several algorithmic
breakthroughs both classical and quantum, and whole research
programmes devoted to the existence and nonexistence of approximation
algorithms. Its specialisation to graph colouring has been one of the 
main benchmarks of progress in exact algorithms.

The deletion--contraction algorithm computes $T_\mygG$ for a connected
$\mygG$ in time within a polynomial factor of $\tau(\mygG)$, the
number of spanning trees of the graph, and no essentially faster
algorithm was known. In this paper we show that the Tutte
polynomial---and hence, by virtue of the Recipe Theorem, \emph{every}
graph invariant admitting a deletion--contraction recursion---can be
computed in time within a polynomial factor of $\sigma(\mygG)$, the
number of vertex subsets that induce a connected subgraph.
Especially, the algorithm runs in time $\exp\bigl(O(n)\bigr)$, that
is, in ``vertex-exponential'' time, while $\tau(G)$ typically is
$\exp\bigl(\omega(n)\bigr)$ and can be as large as
$n^{n-2}$~\cite{Cayl89}.  Previously, vertex-exponential running time
bounds were known only for evaluations of $T_\mygG$ in special regions
of the Tutte plane $(x,y)$, such as for the chromatic polynomial and
(using exponential space) the reliability polynomial,
or only for special classes of graphs such as planar graphs or
bounded-degree graphs. We provide a more detailed overview of such
prior work in \S\ref{sect:prior}.

\begin{figure}
\includegraphics{tutte-plane.1}
\small
\caption{\label{fig: atlas} An atlas of the Tutte plane $(x,y)$.  The
  five points shown by circles and the points on the hyperbola
  $(x-1)(y-1)=1$ are in P, all other points are \#P-complete.  Those
  points and lines where algorithms with complexity $\myvex$ were
  previously known (sometimes only in exponential space), are labelled
  with their running time; note that the hyperbolas $(x-1)(y-1)=q$
  were known to be vertex-exponential only for fixed integer $q$. See
  \S\ref{sec: regions} for references.  Our result is that the entire
  plane admits algorithms with running time $2^nn^{O(1)}$ and
  exponential space, or time $3^nn^{O(1)}$ and polynomial space. The
  only points that are known to admit algorithms with better bounds are the
  ``colouring'' points $(-2,0)$ and $(-3,0)$, the ``Ising'' hyperbola
  $(x-1)(y-1)=2$, for which a faster algorithm in observed in
  \S\ref{sec: regions}, and of course the points in P.  (Only the
  positive branches of the hyperbolas are drawn.)}
\vspace*{-0.3mm}
\end{figure}

\subsection{Result and consequences}

By ``computing the Tutte polynomial'' we mean computing the
coefficients $t_{ij}$ of the monomials $x^iy^j$ in $T_\mygG(x,y)$ 
for a graph $\mygG$ given as input. Of course, the coefficients 
also enable the efficient evaluation of $T_\mygG(x,y)$ at any given 
point $(x,y)$. Our main result is as follows.

\begin{Thm}
\label{thm:connected-main}
  The Tutte polynomial of an $n$-vertex graph $\mygG$ 
  can be computed
\begin{enumerate}
\item[(a)] in time and space $\sigma(\mygG) n^{O(1)}$;
\item[(b)] in time $3^nn^{O(1)}$ and polynomial space; and
\item[(c)] in time $3^{n-s}2^{s}n^{O(1)}$
           and space $2^{s}n^{O(1)}$ for any integer $s$,\, $0 \leq s \leq n$.
\end{enumerate}
\end{Thm}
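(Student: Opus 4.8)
The plan is not to attack $T_\mygG$ directly but to compute the ``random-cluster'' (multivariate Tutte) polynomial $Z_\mygG(q,v)=\sum_{F\subseteq E}q^{\myc F}v^{|F|}$, from which $T_\mygG$ is recovered by a purely algebraic substitution: regrouping \eqref{eq:tutte} with $q\mapsto(x-1)(y-1)$ and the $(y-1)$-factors pulled out gives $T_\mygG(x,y)=(x-1)^{-\myc E}(y-1)^{-|V|}\,Z_\mygG\!\big((x-1)(y-1),\,y-1\big)$. Since $\myc F\le n$ and $|F|\le|E|$, the polynomial $Z_\mygG(q,v)$ has degree at most $n$ in $q$ and at most $|E|$ in $v$, so recovering all $t_{ij}$ from it is exact polynomial arithmetic with only an $n^{O(1)}$ overhead --- for instance, compute $Z_\mygG(q,v)$ as a polynomial in $v$ at $n+1$ integer values of $q$, Lagrange-interpolate coefficientwise in $q$, and substitute. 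We may also assume $\mygG$ connected, since both $T$ and $Z$ factor over connected components (and $\sigma(\cdot)$ adds). Thus everything reduces to evaluating $Z_\mygG(q,v)$ for a connected $\mygG$.

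The engine is the component-decomposition identity suggested by the Fortuin--Kasteleyn expansion (which, recall, identifies $Z_\mygG$ with the $q$-state Potts partition function for positive integer $q$). Fix the pivot $u=\min V$ and classify each $F\subseteq E$ by the vertex set $S\ni u$ of the connected component of $u$ in $(V,F)$; summing over $S$ yields the first-component recurrence
\[
Z_\mygG(q,v)=\sum_{\substack{u\in S\subseteq V\\ \mygG[S]\ \text{connected}}} q\cdot C(\mygG[S];v)\cdot Z_{\mygG[V\setminus S]}(q,v),
\]
where $C(\mygH;v)=\sum_{F\subseteq E(\mygH):\,(\myV{\mygH},F)\ \text{connected}}v^{|F|}$ is the connected-subgraph generating polynomial; iterating strips off one component at a time. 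The identical grouping performed \emph{inside} a vertex set $W$ (group all $F\subseteq E(\mygG[W])$ by the component of $\min W$ and isolate the term $S=W$) gives a recurrence for $C$ itself,
\[
C(\mygG[W];v)=(1+v)^{|E(\mygG[W])|}-\sum_{\substack{\min W\in S\subsetneq W\\ \mygG[S]\ \text{connected}}} C(\mygG[S];v)\,(1+v)^{|E(\mygG[W\setminus S])|},
\]
with base case $C(\mygG[\{x\}];v)=(1+v)^{\ell}$ where $\ell$ is the number of loops at $x$. Both identities are routine generating-function manipulations; the feature that makes them useful is that $C(\mygG[S];v)=0$ unless $\mygG[S]$ is connected, so every sum effectively ranges only over connected vertex sets.

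The three parts then come from three ways of executing these recurrences. For (a) I would tabulate $C(\mygG[W];v)$ over the $\sigma(\mygG)$ connected sets $W$ --- enumerable in $\sigma(\mygG)\,n^{O(1)}$ time by the standard connected-subgraph search --- and then assemble $Z_\mygG$. Here lies what I expect to be the main obstacle: making the \emph{running time}, not merely the table size, $\sigma(\mygG)\,n^{O(1)}$. The recurrences as written spend up to $2^{|W|}$ operations per set $W$, a total of $\sum_W 2^{|W|}$ that can be $3^n$ even when $\sigma(\mygG)$ is small, so one must traverse the connected sets in a carefully chosen order and reuse partial convolution sums instead of rescanning all sub-subsets --- and this is exactly where the $\sigma(\mygG)\,n^{O(1)}$ workspace is spent. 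For (b) I would run the same recurrences by plain recursion without memoisation: unrolling the $C$-recurrence, and the $Z$-recurrence around it, expresses the answer as a sum over ordered tuples of pairwise-disjoint sets exhausting $V$ --- equivalently, once the two nested recursions are collapsed into one, a sum indexed by colourings $V\to\{0,1,2\}$ --- hence a sum of $3^n n^{O(1)}$ terms, each computable in polynomial time and space; the point requiring care is that this collapse really yields a single $3^n$-size sum and nothing larger. For (c) I would hybridise: precompute and store $Z_{\mygG[U]}(q,v)$ for all $2^s$ subsets $U$ of a fixed $s$-element vertex set, then run the $3^{n-s}$-fold branching of part (b) on the remaining $n-s$ vertices with these stored values as base cases, giving time $3^{n-s}2^s n^{O(1)}$ and space $2^s n^{O(1)}$; $s=0$ recovers (b) and $s=n$ gives the $2^n n^{O(1)}$ bound that (a) sharpens to $\sigma(\mygG)\,n^{O(1)}$.
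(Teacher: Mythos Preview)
Your reduction of $T_G$ to the bivariate $Z_G(q,v)$ and the first-component recurrences for $Z$ and $C$ are correct, and in fact coincide with the paper's ``alternative recursion'' in \S\ref{section:recurrence}. But the paper does \emph{not} prove parts (b) and (c) from that recursion, and for a real reason: your claimed collapse to a $3^n$-term sum in part (b) does not hold. Unrolling your $C$-recurrence on a set $W$ produces a signed sum over chains $W\supsetneq S_1\supsetneq\cdots\supsetneq S_k\ni\min W$, that is, over ordered set partitions of $W$; combined with the $Z$-recurrence the result is a sum over ordered set partitions of $V$. The number of such objects is the ordered Bell (Fubini) number, which grows like $n!/\bigl(2(\ln 2)^{n+1}\bigr)$, not $3^n$. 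There is no collapse to a sum indexed by $\{0,1,2\}^V$, and the paper explicitly remarks (end of \S\ref{section:recurrence}) that it is unclear how to extract a polynomial-space algorithm from this recursion.

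The paper's actual route to (b) and (c) is the one you mention in passing but never use algorithmically: the Fortuin--Kasteleyn identity says that at each \emph{integer} $q=1,\ldots,n+1$ the value $Z_G(q,v)$ equals the $q$-state Potts partition function, which the paper rewrites as the coefficient of $z^n$ in $\bigl((f_z\zeta)^q\mu\bigr)(V)=\sum_{Y\subseteq V}(-1)^{|V\setminus Y|}\bigl(\sum_{X\subseteq Y}f(X)z^{|X|}\bigr)^q$ with $f(X)=(1+v)^{|E(G[X])|}$. Evaluating this na\"ively over all $Y$ costs $\sum_{Y}2^{|Y|}n^{O(1)}=3^nn^{O(1)}$ time and polynomial space, after which Lagrange interpolation in $q$ recovers the full polynomial; this is (b). Part (c) is obtained by splitting $V=V_1\cup V_2$ with $|V_2|=s$ and applying the fast zeta transform only on the $V_2$-coordinates inside this same inclusion--exclusion formula (the ``split transform'' of Appendix~\ref{appendix:split-trans}), not by memoising $Z_{G[U]}$ for $U\subseteq V_2$ as you propose --- your hybrid inherits the gap from (b), since the recursion does not reach subsets of $V_2$ after only $3^{n-s}$ branches. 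For (a) you have correctly isolated the crux (making the \emph{time}, not just the table, $\sigma(G)n^{O(1)}$), but the paper's solution --- a suffix-partition update rule (Algorithm~U) that reuses partial sums across neighbouring sets, together with a component-factorisation lemma (Lemma~\ref{lem:component}) --- is the substance of the argument and is absent from your sketch.
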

\vspace*{-0.1mm}

Especially, the Tutte polynomial can be evaluated everywhere in
vertex-exponential time. In some sense, this is both surprising and
optimal, a claim that we solidify under the Exponential Time
Hypothesis in \S\ref{sec: Complexity}.  Moreover, even for those
curves and points of the Tutte plane where a vertex-exponential time
algorithm was known before, our algorithm improves or at least
matches their performance, with only a few exceptions (see
Figure~\ref{fig: atlas}).

For bounded-degree graphs $G$, the deletion--contraction algorithm itself runs
in vertex-exponential time because $\tau(G)=\exp\bigl(O(n)\bigr)$. Theorem
\ref{thm:connected-main}  still gives a better bound because it is known that
$\sigma(G)= O((2-\epsilon)^n)$ for bounded degree \cite[Lemma 6]{BHKK08}, while
$\tau(G)$ grows faster than $2.3^n$ already for 3-regular graphs (see
\S\ref{sec: restricted classes}).  The precise bound is as follows:

\vspace*{-0.1mm}
\begin{Cor}
\label{thm:bounded-degree}
  The Tutte polynomial of an $n$-vertex graph with maximum
  vertex degree $\Delta$ can be computed in time 
  $\xi_{\Delta}^nn^{O(1)}$, where 
  $\xi_\Delta=(2^{\Delta+1}-1)^{1/(\Delta+1)}$.
\end{Cor}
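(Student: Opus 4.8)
The plan is to derive the corollary from part~(a) of Theorem~\ref{thm:connected-main}, which computes the Tutte polynomial in time and space $\sigma(\mygG)\,n^{O(1)}$: it suffices to establish the purely combinatorial bound $\sigma(\mygG)\le\xi_\Delta^{\,n}\,n^{O(1)}$ for every $n$-vertex graph of maximum degree at most $\Delta$, since substituting it into (a) yields the stated running time $\xi_\Delta^{\,n}n^{O(1)}$. First I would dispose of a trivial reduction: if $\mygG$ is disconnected with components on $n_1,\dots,n_r$ vertices, then every connected vertex set lies inside a single component, so $\sigma(\mygG)\le\sum_i\sigma(G_i)$, and since $r\le n$ and $\max_i n_i\le n$ the bound for connected graphs implies the bound in general. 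So assume $\mygG$ is connected.

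The core is a ``local'' counting of connected vertex sets. I would partition $\myV{\mygG}$ greedily into blocks $B_1,\dots,B_m$, where $B_i$ is the part of the closed neighbourhood $N[v_i]$ of a chosen centre $v_i$ that has not yet been placed in an earlier block. Two properties hold by construction: every block has size at most $\Delta+1$, and every neighbour of $v_i$ lies in $B_1\cup\cdots\cup B_i$. Now fix a connected set $S$ with $|S|\ge2$ and record its trace $S\cap B_i$ on each block; since the traces determine $S$, it suffices to bound the number of admissible trace sequences. Connectivity rules out certain traces: if $v_i\in S$ then $v_i$ has a neighbour in $S$, and for a block whose centre's closed neighbourhood survives intact (an ``untrimmed'' block) that neighbour must lie in $B_i$ itself, so the trace $\{v_i\}$ is excluded and the block admits at most $2^{|B_i|}-1$ traces; if instead $v_i\notin S$, the trace lies in $B_i\setminus\{v_i\}$, which again is at most $2^{|B_i|}-1$ possibilities. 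Because $|B_i|\le\Delta+1$ and the function $a\mapsto(2^{a}-1)^{1/a}$ is nondecreasing on $[1,\Delta+1]$, each such block contributes a factor at most $(2^{\Delta+1}-1)^{|B_i|/(\Delta+1)}=\xi_\Delta^{\,|B_i|}$; multiplying over all blocks and adding the at most $n+1$ sets with $|S|\le1$ would give $\sigma(\mygG)\le\xi_\Delta^{\,n}+n+1$.

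The step I expect to be the main obstacle is handling the ``trimmed'' blocks, namely those whose centre has a neighbour already placed in an earlier block: for such a block the connectivity argument above does not obviously exclude any trace, and the crude slack of a factor $2$ per trimmed block is far too lossy (even a path has $\Theta(n)$ trimmed blocks under a careless partition, which already destroys the bound). I would attack this with two devices meant to combine: first, choose the centres so as to maximise the number of untrimmed blocks --- equivalently, grow the partition from a maximal $2$-packing --- which forces every trimmed block to carry a ``backward'' edge from its centre into an earlier, untrimmed block; second, amortise the factor lost at a trimmed block against that backward edge, together with the observation that whenever $v_i\notin S$ the trace already avoids all $2^{|B_i|-1}$ subsets containing $v_i$. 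This bookkeeping is precisely a sharpening of \cite[Lemma~6]{BHKK08}, which only delivers $\sigma(\mygG)=O((2-\epsilon)^n)$, and it is where the exact constant $\xi_\Delta=(2^{\Delta+1}-1)^{1/(\Delta+1)}$ must be extracted; once $\sigma(\mygG)\le\xi_\Delta^{\,n}n^{O(1)}$ is in hand, the corollary is immediate from Theorem~\ref{thm:connected-main}(a).
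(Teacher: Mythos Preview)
Your high-level reduction is exactly what the paper does: the corollary is stated immediately after Theorem~\ref{thm:connected-main} as a direct consequence of part~(a) together with an upper bound on $\sigma(\mygG)$ for bounded-degree graphs. The paper gives no separate argument for the corollary; the bound on $\sigma(\mygG)$ is simply imported from \cite[Lemma~6]{BHKK08}.

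Where you part ways with the paper is in your belief that \cite[Lemma~6]{BHKK08} ``only delivers $\sigma(\mygG)=O((2-\epsilon)^n)$'' and that the constant $\xi_\Delta$ still has to be extracted by a sharper argument of your own. That is a misreading of the surrounding text: the phrase $O((2-\epsilon)^n)$ is an informal paraphrase used to contrast $\sigma(\mygG)$ with $\tau(\mygG)$; the cited lemma already yields the explicit base $(2^{\Delta+1}-1)^{1/(\Delta+1)}$, which is precisely why the corollary can be stated without further proof. So for the purposes of this paper there is nothing to sharpen---you just invoke the lemma.

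As for your attempted self-contained proof of the $\sigma$ bound: your treatment of untrimmed blocks is correct, and you are right that the trimmed blocks are the crux. But the two devices you sketch do not, as described, recover the exact base. Starting from a maximal $2$-packing gives you at most $|P|(\Delta+1)$ vertices in untrimmed blocks, and since $|P|$ can be as small as $n/(\Delta^2+1)$, a $1-O(1/\Delta)$ fraction of the vertices may sit in trimmed blocks, each currently contributing a full factor $2^{|B_i|}$. Your amortisation would then have to recover a factor of roughly $2/\xi_\Delta$ per trimmed vertex from the untrimmed side, and the untrimmed blocks do not carry that much slack (you have already spent all but one of their $2^{|B_i|}$ traces). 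So the plan as written has a genuine gap at the point you flagged; if you want a self-contained argument rather than a citation, you will need to consult the actual proof in \cite{BHKK08} rather than try to strengthen it from the outside.
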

\vspace*{-0.1mm}

The question about solving deletion--contraction based algorithmic
problems in vertex-exponential time makes sense in \emph{directed}
graphs as well.  Here, the most successful attempt to define an
analogue of the Tutte polynomial is Chung and Graham's \emph{cover
  polynomial}, which satisfies directed analogues to the
deletion--contraction operations \cite{CG95}. It turns out that a
directed variant of our main theorem can be established using recent
techniques that are by now well understood, we include the precise
statement and proof in Appendix~\ref{sec: cover}.

\subsection{Overview of techniques}

The Tutte polynomial is, in essence, a sum over connected spanning
subgraphs.  Managing this connectedness property introduces a
computational challenge not present with its specialisations, e.g., with
the chromatic polynomial. Neither the dynamic programming algorithm
across vertex subsets by Lawler \cite{Law76} nor the recent
inclusion--exclusion algorithm \cite{BHK07}, which apply for counting
$k$-colourings, seems to work directly for the Tutte
polynomial. Perhaps surprisingly, they do work for the cover
polynomial, even though the application is quite involved; the details
are in Appendix \ref{sec: cover} and can be seen as an attempt to
explain just how far these concepts get us.

For the Tutte polynomial, we take a detour via the Potts model.  The
idea is to evaluate the partition function of the $q$-states Potts
model at suitable points using inclusion--exclusion, which then, by a
neat identity due to Fortuin and Kasteleyn \cite{FK72,Soka05}, enables the
evaluation of the Tutte polynomial at any given point by polynomial
interpolation. Finally, another round of polynomial interpolation
yields the desired coefficients of the Tutte polynomial. Each step can
be implemented using only polynomial space.  Moreover, the approach
readily extends to the multivariate Tutte polynomial of Sokal
\cite{Soka05} which allows the incorporation of arbitrary edge
weights; that generalisation can be communicated quite concisely using
the involved high-level framework, which we do in
\S\ref{section:tutte-potts}. To finally arrive at the main result of
this paper---reducing the running time to within a polynomial
factor of $\sigma(G)$---requires
manipulation at the level of the fast Moebius transform ``inside'' the
algorithm, which can be found in \S\ref{section:connected}. The smooth
time--space tradeoff, Theorem~\ref{thm:connected-main}(c), is
obtained by a new ``split transform'' technique
(Appendix~\ref{appendix:split-trans}).

Our approach highlights the algorithmic significance of the Fortuin--Kasteleyn
identity, and suggests a more general technique: to compute a polynomial, it may
be advisable to look at its evaluations at integral (or otherwise special) points,
with the objective of obtaining new combinatorial or algebraic interpretations
that then enable faster reconstruction of the entire polynomial. (For example,
the multiplication of polynomials via the fast Fourier transform can be
seen as an instantiation of this technique.) 

We also give another vertex-exponential time algorithm that does not
rely on interpolation (\S\ref{section:recurrence}). It is based on a
new recurrence formula that alternates between partitioning an induced
subgraph into components and a subtraction step to solve the connected
case. The recurrence can be solved using fast subset convolution
\cite{BHKK07a} over a multivariate polynomial ring. However, an
exponential space requirement seems inherent to that
algorithm.\footnote{A previous version of this manuscript followed
  this route, establishing Theorem 1(a).}
Appendix~\ref{appendix:implementation} briefly reports on our
experiences with implementing and running this algorithm; it
outperforms deletion--contraction in the worst case when $n\geq 13$.

%\vspace*{-0.2mm}
\subsection{Conventions}

For standard graph-theoretic terminology we refer to West
\cite{West01}.  All graphs we consider are undirected and may contain
multiple edges and loops. For a graph $\mygG$, we write $n=n(\mygG)$
for the number of vertices, $m=m(\mygG)$ for the number of edges,
$V=\myV\mygG$ for the vertex set, $E=\myE\mygG$ for the edge set,
$c=\myc\mygG$ for the number of connected components, 
$\tau(G)$ for the number of spanning trees,
and $\sigma(G)$ the {\em number of connected sets}, i.e., the number
of vertex subsets that induce a connected graph.

To simplify running time bounds, we assume $m=n^{O(1)}$ and remark 
that this assumption is implicit already in Theorem 
\ref{thm:connected-main}.  (Without this assumption, all the time 
bounds require an additional multiplicative term $m^{O(1)}$.)  
For a set of vertices $U\subseteq\myV\mygG$, we
write $\myI\mygG U$ for the subgraph induced by $U$ in $\mygG$.  
A subgraph $\mygH$ of $\mygG$ is \emph{spanning} if
$\myV\mygH=\myV\mygG$.  For a proposition $P$, we use Iverson's
bracket notation $[P]$ to mean $1$ if $P$ is true and $0$ otherwise.

\section{Prior work: Algorithms for the Tutte Polynomial}
\label{sect:prior}

The direct evaluation of $T_\mygG(x,y)$ based on \eqref{eq:tutte}
takes $2^mn^{O(1)}$ steps and polynomial space, but many other
expansions have been studied in the literature.

\subsection{Spanning Tree Expansion}
\label{sect:spanning-tree-expansion}

If we expand and collect terms in \eqref{eq:tutte} we arrive at
\begin{equation}
\label{eq:spanning-tree-expansion}
T_\mygG(x,y)= \sum_{i,j} t_{ij} x^i y^j\,.
\end{equation}
In fact, this is Tutte's original definition.  The coefficients
$t_{ij}$ of this expansion are well-studied: 
assuming that $\mygG$ is connected, 
$t_{ij}$ is the number of spanning trees of $\mygG$ having
``internal activity'' $i$ and ``external activity'' $j$.  What these
concepts mean need not occupy us here (for example, see
\cite[\S13]{Bigg93}), for our purposes it is sufficient to know that
they can be efficiently computed for a given spanning tree. Thus
\eqref{eq:spanning-tree-expansion} can be evaluated directly by iterating
over all spanning trees of $\mygG$, which can be accomplished with
polynomial delay \cite{KR95}. The resulting running time is within a
polynomial factor of $\tau(\mygG)$.

Some of the coefficients $t_{ij}$ have an alternative 
combinatorial interpretation, and some can be computed faster 
than others. For example, $t_{00}=0$ holds if $m>0$,
and $t_{01}=t_{10}$ if $m>1$. The latter value, 
the \emph{chromatic invariant} $\theta(\mygG)$, 
can be computed from the chromatic polynomial, 
and thus can be found in time $2^nn^{O(1)}$ \cite{BHK07}.

The computational complexity of computing individual coefficients 
$t_{ij}$ has also been investigated. In particular, polynomial-time 
algorithms exist for $t_{n-1-k,j}$ for constant $k$ and all 
$j=0,1,\ldots,m-n+1$. In general, the task of computing 
$t_{ij}$ is \#P-complete \cite{Ann92}.

\subsection{Deletion--Contraction}
\label{sect:deletion--contraction}

The classical algorithm for computing $T_\mygG$ is the
following \emph{deletion--contraction} algorithm. It is
based on two graph transformations involving an edge $e$.
The graph $\mygG\mydeln e$ is obtained from $G$ by \emph{deleting} $e$.
The graph $\mygG/e$ is obtained from $G$ by \emph{contracting} $e$,
that is, by identifying the endvertices of $e$ and then deleting $e$.

With these operations, one can establish the recurrence formula
\begin{equation}\label{eq: deletion-contraction}
  T_\mygG(x,y) =\begin{cases}
    1                                     & \text{if $\mygG$ has no edges;} \\
    y T_{\mygG\mydeln e}(x,y)                & \text{if $e$ is a loop;} \\
    x T_{\mygG/e}(x,y)                       & \text{if $e$ is a bridge;} \\
    T_{\mygG\mydeln e}(x,y)+T_{\mygG/e}(x,y) & \text{otherwise.}\\
\end{cases}
\end{equation}

The deletion--contraction algorithm defined by a direct evaluation of
\eqref{eq: deletion-contraction} leads to 
a running time that scales as the Fibonacci sequence, 
$\bigl((1+\sqrt 5)/2\bigr)^{n+m}=O(1.6180^{n+m})$ \cite{Wilf}.
Sekine, Imai, and Tani \cite{SIT95} observed that the
corresponding computation tree has one leaf for every spanning tree of
$\mygG$, so \eqref{eq: deletion-contraction} is yet another way to
evaluate $T_G$ in time within a polynomial factor of $\tau(\mygG)$. In
practice one can speed up the computation by identifying isomorphic
graphs and using dynamic programming to avoid redundant recomputation
\cite{HPR,Imai00,SIT95}.

The deletion--contraction algorithm is known to compute many different
graph parameters. For example, the number of spanning trees admits an
analogous recursion, as does the number of acyclic orientations, the
number of colourings, the dimension of the bicycle space, and so forth
\cite[\S15.6--8]{GR01}. This is no surprise: all these graph
parameters are evaluations of the Tutte polynomial at certain points.
But not only is every specialisation of $T_\mygG$ expressible by
deletion--contraction, the converse holds as well: \emph{every} graph
parameter that can be expressed as a deletion--contraction recursion
turns out to be a valuation of $T_\mygG$, according to the celebrated
Recipe Theorem of Oxley and Welsh \cite{OW79}
(cf.~\cite[Theorem~X.2]{Boll98}).

Besides deletion--contraction, many other expansions are known 
(in particular for restrictions of the Tutte polynomial; see \cite{Bigg93}), 
even a convolution over the set of edges \cite{KRS99}, 
but none leads to vertex-exponential time.

\subsection{Regions of the Tutte plane}
\label{sec: regions}

The question at which points $(x,y)$ the Tutte polynomial can be
computed exactly and efficiently was completely settled in the
framework of computational complexity in the seminal paper of Jeager,
Vertigan, and Welsh \cite{JVW90}: They presented a complete classification 
of points and curves where the problem is polynomial-time computable, and
where it is \#P-complete. This result shows us where we probably 
need to resign ourselves to a superpolynomial-time algorithm.

For most of the \#P-hard points, the algorithms from
\S\ref{sect:spanning-tree-expansion} and
\S\ref{sect:deletion--contraction} were best known.  However, for
certain regions of the Tutte plane, algorithms running in time
$\myvex$ have been known before. We attempt to summarise these
algorithms here, including the polynomial-time cases; see
Figure~\ref{fig: atlas}.

\begin{description}
\item[Trivial hyperbola] On the hyperbola $(x-1)(y-1)= 1$ 
    the terms of \eqref{eq:tutte} involving $\myc F$ cancel, so
  $T_\mygG(x,y)=(x-1)^{n-c}y^m$, which can be evaluated
  in polynomial time.
\item[Ising model] On the hyperbola $H_2 \equiv (x-1)(y-1) = 2$, 
  the Tutte polynomial gives the partition function of the 
  \emph{Ising model}, a sum of easily computable weights over the 
  $2^n$ configurations of $n$ two-state spins. This can be
  trivially computed in time $2^n n^{O(1)}$ and polynomial
  space.  
  By dividing the $n$ spins into three groups of
  about equal size and using fast matrix multiplication, one
  can compute the sum in time $2^{n \omega/3} n^{O(1)} = O(1.732^n)$
  and exponential space, where $\omega$ is the exponent of
  matrix multiplication; this is yet a new  
  application of Williams's trick \cite{BH07,Koi06,Wil04}.
\item[Potts model] More generally, for any integer $q \geq 2$, the
  Tutte polynomial on the hyperbola $H_q \equiv (x-1)(y-1) = q$ gives
  the partition function of the \emph{q-state Potts model}
  \cite{Potts_1952}.  This is a sum over the configurations of $n$
  spins each having $q$ possible states. It can be computed trivially
  in time $q^nn^{O(1)}$ and, via fast matrix multiplication, in time
  $q^{n3/\omega}n^{O(1)}$.  We will show in
  \S\ref{section:tutte-potts} that, in fact, time $2^nn^{O(1)}$
  suffices, which result will be an essential building block in our
  main construction.
\item[Reliability polynomial] The reliability polynomial
  $R_G(p)$, which is the probability that no component of $G$ is
  disconnected after independently removing each edge with 
  probability $1-p$, satisfies $R_G(p) = p^{m-n+c}(1-p)^{n-c} T_G(1, 1/p)$ 
  and can be evaluated in time $3^{n} n^{O(1)}$ 
  and exponential space \cite{Buza80}.
\item[Number of spanning trees] For connected $\mygG$, $T_\mygG(1,1)$
  equals the number $\tau(\mygG)$ of spanning trees, and is
  computable in polynomial time as the determinant of 
  a maximal principal submatrix of the Laplacian of $\mygG$, 
  a result known as Kirchhoff's Matrix--Tree Theorem.
\item[Number of spanning forests] 
  The number of spanning forests, $T_\mygG(2,1)$, is
  computable in time $2^nn^{O(1)}$ by first using the Matrix--Tree Theorem
  for each induced subgraph and then assembling the result 
  one component (that is, tree) at a time via
  inclusion--exclusion 
  \cite{BHK07}. (This observation is new to the present 
  work, however.)
\item[Dimension of the bicycle space] $T_\mygG(-1,-1)$ computes the
  dimension of the bicycle space, in polynomial time by Gaussian
  elimination.
\item[Number of nowhere-zero 2-flows] $T_\mygG(0,-1)=1$ if $G$ is
  Eulerian (in other words, it ``admits a nowhere-zero 2-flow''), and
  $T_\mygG(0,-1)=0$ otherwise. Thus $T_\mygG(0,-1)$ is computable
  in polynomial time.
\item[Chromatic polynomial] The chromatic polynomial $P_\mygG(t)$,
  which counts the number of proper $t$-colourings of the vertices of
  $\mygG$, satisfies $P_\mygG(t)=(-1)^{n-c}t^cT_\mygG(1-t,0)$ and can
  be computed in time $2^nn^{O(1)}$ \cite{BHK07}. Vertex-exponential
  time algorithms were known at least since Lawler~\cite{Law76}, and a
  vertex-exponential, polynomial-space algorithm was found only
  recently \cite{BH07}.  Other approaches to the chromatic polynomial
  are surveyed by Anthony \cite{Ant90}.  At $t=2$ (equivalently,
  $x=-1$) this is polynomial-time computable by breadth-first search
  (every connected component of a bipartite graph has exactly two
  proper 2-colourings). The cases $t=3,4$ are well-studied benchmarks
  for exact counting algorithms, the current best bounds are
  $O(1.6262^n)$ and $O(1.9464^n)$ \cite{FGS07}.  The case $x=0$ is
  trivial.
\end{description}

To the best knowledge of the authors, no algorithms with running time
$\myvex$ have been known for other real points. If we allow
$x$ and $y$ to be complex, there are four more points $(x,y)$
at which $T_\mygG$ can be evaluated in polynomial time \cite{JVW90}.

\subsection{Restricted graph classes}
\label{sec: restricted classes}
Explicit formulas for Tutte polynomial have been derived
for many elementary families of graphs, such as 
$T(\mygC_n;x,y)= y+x+x^2+\cdots+x^{n-1}$ for
the $n$-cycle graph $\mygC_n$. We will not give an overview of these 
formulas here (see \cite[\S13]{Bigg93}); most of 
them are applications of deletion--contraction.

For well-known graph classes, the authors know the following results 
achieving $\myvex$ running time or better:

\begin{description}
\item[Planar graphs] If $\mygG$ is planar, then the Tutte polynomial can be
  computed in time $\exp\bigl(O(\sqrt{n}\,)\bigr)$ \cite{SIT95}. This
  works more generally, with a slight overhead: in classes of graphs
  with separators of size $n^{\alpha}$, the Tutte polynomial can
  be computed in time $\exp\bigl(O(n^{\alpha}\log n)\bigr)$.

\item[Bounded tree-width and branch-width] For $k$ a fixed integer, if
  $\mygG$ has tree-width $k$ then $T_\mygG$ can be computed in
  polynomial time \cite{Andr98, Nob98}. This can be generalised to
  branch-width \cite{Hli06}.

\item[Bounded clique-width and cographs] For $k$ a fixed integer, if
  $\mygG$ has clique-width $k$ then $T_\mygG$ can be computed in time
  $\exp\bigl(O(n^{1-1/(k+2)})\bigr)$ \cite{GHN06}. A special case of this 
  is the class of cographs (graphs without an induced path of 4 vertices), 
  where the bound becomes $\exp\bigl(O(n^{2/3})\bigr)$.

\item[Bounded-degree graphs] If $\Delta$ is the maximum degree of a
  vertex, the deletion--contraction algorithm and $2m\leq n\Delta$
  yield the vertex-exponential running time bound
  $O\bigl(1.6180^{(1+\Delta/2)n}\bigr)$ directly from the recurrence.
  Gebauer and Okamoto improve this to $\chi_\Delta^nn^{O(1)}$, where
  $\chi_\Delta=2(1-\Delta 2^{-\Delta})^{1/(\Delta+1)}$ (for example,
  $\chi_3= 2.5149$, $\chi_4=3.7764$, and $\chi_5=5.4989$).  For
  $k$-regular graphs with $k\geq 3$ a constant independent of $n$, the
  number of spanning trees (and hence, within a polynomial factor, the
  running time of the deletion--contraction algorithms) is bounded by
  $ \tau(G)=O\bigl(\nu_k^nn^{-1}\log n\bigr)$, where
  $\nu_k=(k-1)^{k-1}/(k^2-2k)^{k/2-1}$ (for example, $\nu_3=2.3094$,
  $\nu_4=3.375$, and $\nu_5=4.4066$), and this bound is tight \cite{ChYa99}.
  
\item[Interval graphs]\sloppy If $\mygG$ is an interval graph, then $T_\mygG$
  can be computed in time $O(1.9706^m)$, which is not $\myvex$ in
  general, but still faster than by deletion--contraction
  \cite{GO07}.
\end{description}

What we cannot survey here is the extensive literature that 
studies algorithms that simultaneously specialise  $T_\mygG$ and restrict
the graph classes, often with the goal of developing a polynomial-time
algorithm. A famous example is that for Pfaffian orientable graphs,
which includes the class of planar graphs, the Tutte polynomial is
polynomial-time computable on the hyperbola $H_2$ \cite{Kast61}. 
Within computer science, the most studied specialisation of
this type is most likely graph colouring for restricted graph classes.

\subsection{Computional complexity}
\label{sec: Complexity}

The study of the computational complexity of the Tutte polynomial
begins with Valiant's theory of \#P-completeness \cite{Val79} and the
exact complexity results of Jaeger, Vertigan, and Welsh \cite{JVW90}.
The study of the approximability of the values of $T_\mygG$ has been a
very fruitful research direction, an overview of which is again
outside the scope of this paper. In this regard we refer to Welsh's
monograph \cite{Welsh93} and to the recent paper of Goldberg and
Jerrum \cite{GJ07} for a survey of newer developments.

For our purposes, the most relevant hardness results have been
established under the Exponential Time Hypothesis \cite{IPZ01}
(ETH). First, deciding whether a given graph can be 3-coloured requires
$\exp(\Omega(n))$ time under ETH, and since 3-colourability can be decided by
computing $T_\mygG(-2,0)$ we see that evaluating the Tutte polynomial
requires vertex-exponential time under ETH. Thus, it would be
surprising if our results could be significantly improved, for example
to something like $\exp\big(O(n/\log n)\big)$.

Second, it is by no means clear that the entire Tutte plane should
admit such algorithms. Many specialisations of the Tutte polynomial
can be understood as constraint satisfaction problems. For example,
graph colouring is an instance of $(q,2)$-CSP, the class of constraint
satisfaction problems with pairwise constraints over $q$-state
variables. Similarly, the partition function for the Potts model can
be seen as a weighted counting CSP \cite{DGJ07}. Very recently,
Traxler \cite{Tra08} has shown that already the decision version of
$(q, 2)$-CSP requires time $\exp\big(\Omega(n\log q)\big)$ under ETH,
even for some very innocent-looking restrictions, and even for bounded
degree graphs. Thus in general, these CSPs are not vertex-exponential
under ETH.

\section{The multivariate Tutte polynomial via 
         the $q$-state Potts model}
\label{section:tutte-potts}

Let $R$ be a multivariate polynomial ring over a field 
and let $G$ be an undirected graph with vertex 
set $V=\{1,2,\ldots,n\}$ and edge set $E$, $m=n^{O(1)}$.
We allow $G$ to have parallel edges and loops.
Associate with each $e\in E$ a ring element $r_e\in R$.
The {\em multivariate Tutte polynomial} \cite{Soka05} of $G$ is the
polynomial
\begin{equation}
\label{eq:multivariate-tutte}
Z_G(q,r)=\sum_{F\subseteq E}q^{c(F)}\prod_{e\in F}r_e\,,
\end{equation}
where $q$ is an indeterminate and $c(F)$ denotes the number 
of connected components in the graph with vertex set $V$ 
and edge set $F$. The product over an empty set always 
evaluates to $1$.

The classical Tutte polynomial $T_G(x,y)$ can be recovered 
as a bivariate evaluation of the multivariate polynomial 
$Z_G(q,r)$ via
\begin{equation}
\label{eq:multivariate-eval-classical}
T_G(x,y)=(x-1)^{-c(E)}(y-1)^{-|V|}Z_G\bigl((x-1)(y-1),y-1\bigr)\,.
\end{equation}

\subsection{The Fortuin--Kasteleyn identity}

At points $q=1,2,\ldots$ the multivariate Tutte polynomial 
$Z_G(q,r)$ can be represented as an evaluation of the 
partition function of the $q$-state Potts model \cite{FK72,Soka05}.

For a mapping $s:V\rightarrow \{1,2,\ldots,q\}$ and 
an edge $e\in E$ with endvertices $x$ and $y$, 
define $\delta_e^s=1$ if $s(x)=s(y)$ and $\delta_e^s=0$ if $s(x)\neq s(y)$.
The {\em partition function} of the $q$-state Potts model
on $G$ is defined by
\begin{equation}
\label{eq:potts}
Z_G^{\text{Potts}}(q,r)=
\sum_{s:V\rightarrow \{1,2,\ldots,q\}}
  \prod_{e\in E}\bigl(1+r_e\delta_e^s\bigr)\,.
\end{equation}

\begin{Thm}[Fortuin and Kasteleyn]
\label{thm:fortuin--kasteleyn}
For all $q=1,2,\ldots$ it holds that 
\begin{equation}
\label{eq:fortuin--kasteleyn}
Z_G(q,r) = Z_G^{\text{Potts}}(q,r)\,.
\end{equation}
\end{Thm}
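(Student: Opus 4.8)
The plan is to prove the Fortuin--Kasteleyn identity by expanding the product in the Potts partition function~\eqref{eq:potts} and comparing the result term-by-term with the subgraph expansion~\eqref{eq:multivariate-tutte}. First I would fix an integer $q\ge 1$ and write
\begin{equation*}
\prod_{e\in E}\bigl(1+r_e\delta_e^s\bigr)=\sum_{F\subseteq E}\prod_{e\in F}r_e\delta_e^s\,,
\end{equation*}
which is just the distributive law: choosing, for each edge, either the summand $1$ or the summand $r_e\delta_e^s$ corresponds to choosing a subset $F\subseteq E$ of ``active'' edges. Substituting this into~\eqref{eq:potts} and interchanging the two (finite) sums gives
\begin{equation*}
Z_G^{\text{Potts}}(q,r)=\sum_{F\subseteq E}\Biggl(\prod_{e\in F}r_e\Biggr)\sum_{s:V\to\{1,\ldots,q\}}\ \prod_{e\in F}\delta_e^s\,.
\end{equation*}

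The heart of the argument is then to evaluate the inner sum $\sum_{s}\prod_{e\in F}\delta_e^s$ and show it equals $q^{c(F)}$. The product $\prod_{e\in F}\delta_e^s$ is $1$ precisely when $s(x)=s(y)$ for every edge $e\in F$ with endpoints $x,y$, and $0$ otherwise; in other words, it is the indicator that $s$ is constant on each connected component of the graph $(V,F)$. (Note loops contribute a factor $\delta_e^s=1$ automatically and parallel edges impose no extra constraint, so the multigraph setting causes no difficulty.) Hence the inner sum counts exactly the mappings $s:V\to\{1,\ldots,q\}$ that are constant on each of the $c(F)$ components of $(V,F)$; such a map is determined by an independent choice of a value in $\{1,\ldots,q\}$ for each component, giving $q^{c(F)}$ of them. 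Substituting back yields
\begin{equation*}
Z_G^{\text{Potts}}(q,r)=\sum_{F\subseteq E}q^{c(F)}\prod_{e\in F}r_e = Z_G(q,r)\,,
\end{equation*}
which is exactly~\eqref{eq:fortuin--kasteleyn}.

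One point worth a sentence of care is the role of the ring $R$ and the meaning of ``$q=1,2,\ldots$'': both sides are polynomials in $q$ with coefficients in $R$, and the identity above has been established for every positive integer value of $q$ by a purely combinatorial counting argument, so no algebraic manipulation beyond expanding a product and swapping finite sums is needed. I do not anticipate a serious obstacle here: the only thing to get right is the characterization of when $\prod_{e\in F}\delta_e^s=1$ (constancy on components) and the resulting count $q^{c(F)}$; everything else is bookkeeping. In particular one should make sure the empty product conventions match those fixed after~\eqref{eq:multivariate-tutte}, namely $\prod_{e\in\varnothing}r_e=1$, so that the $F=\varnothing$ term contributes $q^{c(\varnothing)}=q^{n}$ on both sides.
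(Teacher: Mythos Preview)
Your proposal is correct and follows essentially the same route as the paper's proof: expand the product over $E$ into a sum over $F\subseteq E$, swap the two finite sums, and then observe that $\prod_{e\in F}\delta_e^s$ is nonzero exactly when $s$ is constant on each component of $(V,F)$, giving $q^{c(F)}$ choices. The extra remarks you include about loops, parallel edges, and the empty-product convention are sound but not needed for the argument to go through.
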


\subsection{The multivariate Tutte polynomial via the $q$-state Potts model}

By virtue of the Fortuin--Kasteleyn identity 
\eqref{eq:fortuin--kasteleyn},
to compute $Z_G(q,r)$ it suffices to evaluate
\[
 Z_G^{\text{Potts}}(1,r),\ 
 Z_G^{\text{Potts}}(2,r),\ 
 \ldots,\ 
 Z_G^{\text{Potts}}(n+1,r)
\]
and then recover $Z_G(q,r)$ via Lagrangian interpolation.
For the interpolation to succeed, it is necessary to assume that 
the coefficient field of $R$ has a large enough characteristic 
so that $1,2,\ldots,n$ have multiplicative inverses.

At first sight the evaluation of \eqref{eq:potts}
for a positive integer $q$ appears to require $q^nn^{O(1)}$ ring 
operations. Fortunately, one can do better.
To this end, let us express $Z_G^{\text{Potts}}(q,r)$ in 
a more convenient form.
For $X\subseteq V$, denote by $G[X]$ the subgraph of $G$ 
induced by $X$, and let 
\begin{equation}
\label{eq:potts-f}
f(X)=\prod_{e\in E(G[X])}(1+r_e)\,.
\end{equation}
For $q=1,2,\ldots$, we have
\begin{equation}
\label{eq:potts-partition}
Z_G^{\text{Potts}}(q,r)\ =\!\!\!\sum_{(U_1,U_2,\ldots,U_q)}\!\!\!
f(U_1)f(U_2)\cdots f(U_q)\,,
\end{equation}
where the sum is over all $q$-tuples $(U_1,U_2,\ldots,U_q)$
with $U_1,U_2,\ldots,U_q\subseteq V$ such 
that $\cup_{i=1}^q U_i=V$ and 
$U_j\cap U_k\neq\emptyset$ for all $1\leq j<k\leq q$.

We now proceed to develop algorithms for evaluating
the Potts partition function in the form \eqref{eq:potts-partition}.

\subsection{The baseline algorithm}

Let $f:2^V\rightarrow R$ be a function that associates
a ring element $f(X)\in R$ with each subset $X\subseteq V$.

The {\em zeta transform} $f\zeta:2^V\rightarrow R$ is defined 
for all $Y\subseteq V$ by $f\zeta(Y)=\sum_{X\subseteq Y}f(X)$.
The {\em Moebius transform} $f\mu:2^V\rightarrow R$ is defined 
for all $X\subseteq V$ by 
$f\mu(X)=\sum_{Y\subseteq X}(-1)^{|X\setminus Y|}f(Y)$.

It is a basic fact that the zeta and Moebius transforms
are inverses of each other, that is, $f\zeta\mu=f\mu\zeta=f$
for all $f$. Furthermore, it is known \cite{BHKK07a} that
\begin{equation}
\label{eq:fcovering}
\bigl((f\zeta)^q\mu\bigr)(V)=
\sum_{(U_1,U_2,\ldots,U_q)}f(U_1)f(U_2)\cdots f(U_q)\,,
\end{equation}
where the sum is over all $q$-tuples $(U_1,U_2,\ldots,U_q)$ with
$U_1,U_2,\ldots,U_q\subseteq V$ and $\cup_{j=1}^q U_j=V$.
In particular, $\bigl((f\zeta)^q\mu\bigr)(V)$ can be computed
directly in $3^nn^{O(1)}$ ring operations by storing $n^{O(1)}$ 
ring elements. Using the fast zeta and Moebius transforms, 
$\bigl((f\zeta)^q\mu\bigr)(V)$ can be computed
in $2^nn^{O(1)}$ ring operations by storing $2^nn^{O(1)}$ ring elements
\cite{BHKK07a}.

To use this to evaluate \eqref{eq:potts-partition}, adjoin 
a new indeterminate $z$ into $R$ to obtain the polynomial ring $R[z]$.
Replace $f$ with $f_z:2^V\rightarrow R[z]$ defined 
for all $X\subseteq V$ by $f_z(X)=f(X)z^{|X|}$. 
Now evaluate the $z$-polynomial $\bigl((f_z\zeta)^q\mu\bigr)(V)$ 
and look at the coefficient of the monomial $z^{|V|}$,
which by virtue of \eqref{eq:fcovering} is equal 
to \eqref{eq:potts-partition}.

This baseline algorithm together with \eqref{eq:multivariate-eval-classical}, 
\eqref{eq:fortuin--kasteleyn}, and Lagrangian interpolation 
establishes that the Tutte polynomial $T_G(x,y)$ can be computed 
(a) in time and space $2^nn^{O(1)}$; and
(b) in time $3^nn^{O(1)}$ and space $n^{O(1)}$.
This proves Theorem \ref{thm:connected-main}(b).
A more careful analysis of $\bigl((f\zeta)^q\mu\bigr)(V)$ 
enables the time--space tradeoff in Theorem~\ref{thm:connected-main}(c).
[[ See Appendix \ref{appendix:split-trans}. ]]

\section{Improvements and variations}

\subsection{An algorithm over connected sets}
\label{section:connected}

It is useful to think of $X\subseteq V$ in what 
follows as the current subset under consideration.
We start with a lemma that partitions the subsets of $X$ 
based on the maximum common suffix. 
To this end, let $Y\myieq i X$ be a shorthand 
for $Y\cap\{i+1,i+2,\ldots,n\}=X\cap\{i+1,i+2,\ldots,n\}$.
\begin{Lem}[Suffix partition]
\label{lem:suffix-partition}
Let $Y\subseteq X\subseteq\{1,2,\ldots,n\}$.
Then, either $Y=X$ or there exists a unique $i\in X$
such that $Y\myieq{i-1} X\setminus\{i\}$.
\end{Lem}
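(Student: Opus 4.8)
The plan is to prove existence by an extremal choice of $i$ and uniqueness by a short comparison argument; the whole thing is elementary bookkeeping with the index intervals $\{i+1,i+2,\ldots,n\}$.

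First I would dispose of the trivial alternative: if $Y=X$, we are in the first case and there is nothing to do. So assume $Y\subsetneq X$ and set $i:=\max(X\setminus Y)$, the largest element of $X$ that is missing from $Y$. This is well defined (the set $X\setminus Y$ is nonempty) and $i\in X$, which is the first thing the conclusion requires.

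Next I would check that this $i$ satisfies $Y\myieq{i-1} X\setminus\{i\}$. The key is to unwind the notation: by definition $Y\myieq{i-1} X\setminus\{i\}$ means $Y\cap\{i,i+1,\ldots,n\}=(X\setminus\{i\})\cap\{i,i+1,\ldots,n\}$, and since $i\notin X\setminus\{i\}$ the right-hand side is simply $X\cap\{i+1,i+2,\ldots,n\}$. Because $i\notin Y$, the left-hand side equals $Y\cap\{i+1,i+2,\ldots,n\}$. So it remains to show $Y\cap\{i+1,\ldots,n\}=X\cap\{i+1,\ldots,n\}$: the inclusion $\subseteq$ holds since $Y\subseteq X$, and for the reverse, any $j>i$ with $j\in X$ cannot lie in $X\setminus Y$ by maximality of $i$, hence $j\in Y$. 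This finishes existence.

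For uniqueness I would argue that if $i,j\in X$ with (say) $i<j$ both satisfied the condition, we reach a contradiction. From $j$'s condition, $Y\cap\{j,\ldots,n\}=X\cap\{j+1,\ldots,n\}$, which omits $j$, so $j\notin Y$. From $i$'s condition, $Y\cap\{i,\ldots,n\}=X\cap\{i+1,\ldots,n\}$; since $j\geq i+1$ and $j\in X$, this forces $j\in Y$ — a contradiction. Hence at most one such $i$ exists. (One may also remark that the two alternatives of the lemma are mutually exclusive: if $Y=X$ then no $i\in X$ can satisfy the condition, since it would force $i\notin X$.) I do not anticipate any real obstacle here; the only point demanding care is translating $Y\myieq{i-1} X\setminus\{i\}$ faithfully into the set equality $Y\cap\{i,\ldots,n\}=X\cap\{i+1,\ldots,n\}$ and keeping the shift between $i$ and $i+1$ straight.
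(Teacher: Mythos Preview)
Your proof is correct and follows exactly the paper's approach: the paper's entire proof is the single line ``Either $Y=X$ or $i=\max X\setminus Y$,'' and you have simply spelled out the verification of existence and uniqueness that this terse statement leaves implicit.
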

\begin{Proof}
Either $Y=X$ or $i=\max X\setminus Y$.
\end{Proof}

The intermediate values computed by the algorithm are now 
defined as follows.

\begin{Def}
\label{def:fpoly}
Let $X\subseteq V$, $q=1,2,\ldots,n+1$, and $i=0,1,\ldots,n$.
Let 
\[
F(X,q,i)=\sum_{(U_1,U_2,\ldots,U_q)}\prod_{j=1}^q f(U_j)\,,
\]
where the sum is over all $q$-tuples $(U_1,U_2,\ldots,U_q)$
such that both $U_1,U_2,\ldots,U_q\subseteq X$ and
$\cup_{j=1}^q U_j\myieq{i} X$.
\end{Def}

Note that $F(V,q,0)=((f\zeta)^q\mu)(V)$. Thus, it suffices to
compute $F(V,q,0)$.

We are now ready to describe the algorithm that computes the
intermediate values $F(X,q,i)$ in Definition \ref{def:fpoly}.
The algorithm considers one set $X\subseteq V$ at a time, starting 
with the empty set $X=\emptyset$ and proceeding upwards in the subset 
lattice. It is required that the maximal proper subsets of $X$
have been considered before $X$ itself is considered; for example, 
we can consider the subsets of $V$ in increasing lexicographic order. 
The comments delimited by ``[['' and ``]]'' justify 
the computations in the algorithm.

\medskip\noindent{\bf Algorithm~U.} (\emph{Up-step.})
Computes the values $F(X,q,i)$ associated with $X$ 
using the values associated with 
$X\setminus\{i\}$ for all $i\in X$.\\
{\em Input:} A subset $X\subseteq V$
     and the value $F(X\setminus\{i\},q,i-1)$ 
             for each $i\in X$ and $q=1,2,\ldots,n+1$.\\
{\em Output:} The value $F(X,q,i)$ for each $q=1,2,\ldots,n+1$
              and $i=0,1,\ldots,n$.
\begin{itemize}
  \item[{\bf U1:}]
  For each $q=1,2,3,\ldots,n+1$, set 
  \[
    F(X,q,n)=\biggl(f(X)+\sum_{i\in X}F(X\setminus\{i\},1,i-1)\biggr)^q\,.
  \]
  \begin{comm}
  By the suffix partition lemma, 
  $\sum_{Y\subsetneq X} f(Y)=\sum_{i\in X}F(X\setminus\{i\},1,i-1)$.
  Adding $f(X)$ and taking powers, we obtain $F(X,q,n)$.
  \end{comm}
  \item[{\bf U2:}] 
  For each $q=1,2,3,\ldots,n+1$ and $i=n,n-1,\ldots,1$, set
  \[
    F(X,q,i-1)=F(X,q,i)-[i\in X]F(X\setminus\{i\},q,i-1)\,.
  \]
  \begin{comm}
    There are two cases to consider to justify correctness.
    First, assume that $i\notin X$. Consider an arbitrary
    $q$-tuple $(U_1,U_2,\ldots,U_q)$ with $U_1,U_2,\ldots,U_q\subseteq X$.
    Let $Y=\cup_{j=1}^q U_j$. Clearly, $Y\subseteq X$. 
    Because $i\notin X$ and $Y\subseteq X$, we have $Y\myieq{i-1} X$ 
    if and only if $Y\myieq{i} X$. Thus, $F(X,q,i-1)=F(X,q,i)$.
    Second, assume that $i\in X$. In this case we have
    $Y\myieq{i} X$ if and only if either
    $Y\myieq{i-1} X$ or $Y\myieq{i-1} X\setminus\{i\}$
    (the former case occurs if $i\in Y$, the latter if $i\notin Y$).
    In the latter case, $Y\subseteq X\setminus\{i\}$
    and hence $U_1,U_2,\ldots,U_q\subseteq X\setminus\{i\}$.
    Thus, $F(X,q,i-1)=F(X,q,i)-F(X\setminus\{i\},q,i-1)$.
  \end{comm}
\end{itemize}

Assume that $f$ satisfies the following property:
for all $X\subseteq V$ it holds that 
\begin{equation}
\label{eq:component-factorisation}
f(X)=f(X_1)f(X_2)\cdots f(X_s)
\end{equation}
where $\myI\mygG {X_1},\myI\mygG {X_2},\ldots,\myI\mygG {X_s}$
are the connected components of $\myI\mygG X$.
For convenience we also assume that $f(\emptyset)=1$.
Note that the factorisation \eqref{eq:component-factorisation}
is well-defined because of commutativity of $R$.
Also note that \eqref{eq:potts-f} satisfies 
\eqref{eq:component-factorisation}.

\begin{Lem}
\label{lem:component}
Let $\myI\mygG {X_1},\myI\mygG {X_2},\ldots,\myI\mygG {X_s}$
be the connected components of $\myI\mygG X$. 
Then,
\begin{equation}
\label{eq:f-component}
F(X,q,i)=\prod_{k=1}^{s} F(X_k,q,i)\,.
\end{equation}
\end{Lem}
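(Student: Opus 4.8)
The plan is to verify the identity by matching up the $q$-tuples counted on each side. Fix $X$ with connected components $G[X_1],\dots,G[X_s]$, and fix $q$ and $i$. A $q$-tuple $(U_1,\dots,U_q)$ contributing to $F(X,q,i)$ satisfies $U_1,\dots,U_q\subseteq X$ and $\bigcup_j U_j \myieq{i} X$. First I would observe that, because each $U_j\subseteq X$ and $X$ is the disjoint union of the vertex sets $X_1,\dots,X_s$, each $U_j$ splits canonically as $U_j = \bigsqcup_{k=1}^s (U_j\cap X_k)$. Writing $U_j^{(k)} = U_j\cap X_k$, this gives a bijection between $q$-tuples $(U_1,\dots,U_q)$ with $U_j\subseteq X$ and $s$-tuples of $q$-tuples $\bigl((U_1^{(k)},\dots,U_q^{(k)})\bigr)_{k=1}^s$ with $U_j^{(k)}\subseteq X_k$.

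Next I would check that this bijection respects the ``$\myieq{i}$'' constraint componentwise. The condition $\bigcup_j U_j \myieq{i} X$ says $(\bigcup_j U_j)\cap\{i+1,\dots,n\} = X\cap\{i+1,\dots,n\}$; intersecting both sides with $X_k$ and using that the $X_k$ partition $X$, this is equivalent to $(\bigcup_j U_j^{(k)})\cap\{i+1,\dots,n\} = X_k\cap\{i+1,\dots,n\}$ holding for every $k$, i.e.\ to $\bigcup_j U_j^{(k)}\myieq{i} X_k$ for all $k$. So the $q$-tuples contributing to $F(X,q,i)$ correspond exactly to $s$-tuples whose $k$-th entry contributes to $F(X_k,q,i)$.

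Finally I would handle the weights: for a fixed tuple, the summand is $\prod_{j=1}^q f(U_j)$, and applying the component-factorisation hypothesis~\eqref{eq:component-factorisation} to each $U_j$ — whose induced-subgraph components are precisely the nonempty $U_j^{(k)}$, since $U_j\subseteq X$ and the $X_k$ are the components of $G[X]$ — together with $f(\emptyset)=1$, gives $f(U_j)=\prod_{k=1}^s f(U_j^{(k)})$. Hence $\prod_{j=1}^q f(U_j) = \prod_{k=1}^s\prod_{j=1}^q f(U_j^{(k)})$, and summing over all tuples and using commutativity of $R$ to factor the sum over a product index set yields $F(X,q,i)=\prod_{k=1}^s F(X_k,q,i)$. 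The only mild subtlety, and the step I would be most careful about, is the justification that the induced-subgraph components of $U_j$ are exactly the nonempty sets $U_j\cap X_k$: this is where it matters that the $X_k$ are the components of $G[X]$ and not merely an arbitrary partition, so that no edge of $G[X]$ crosses between different $X_k$'s and each $G[U_j\cap X_k]$ is a union of whole components of $G[U_j]$.
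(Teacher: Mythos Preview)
Your overall strategy matches the paper's: set up a bijection between $q$-tuples contributing to $F(X,q,i)$ and $s$-tuples of $q$-tuples contributing to the $F(X_k,q,i)$, check that the $\myieq{i}$ constraint factors componentwise, and verify that the weights factor. The first two steps are correct and essentially identical to the paper's argument.

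The third step contains a genuine error. You assert that ``the induced-subgraph components of $U_j$ are precisely the nonempty $U_j^{(k)}$,'' but this is false in general: if, say, $G[X_k]$ is the path $a$--$b$--$c$ and $U_j\cap X_k=\{a,c\}$, then $G[U_j\cap X_k]$ has two components, not one. What \emph{is} true---and what you yourself correctly note at the end---is the weaker statement that each $G[U_j\cap X_k]$ is a disjoint union of whole components of $G[U_j]$. But that weaker fact does not yield $f(U_j)=\prod_k f(U_j\cap X_k)$ by a single application of~\eqref{eq:component-factorisation}; you must apply it \emph{twice}: once to $U_j$ to write $f(U_j)$ as a product over the actual components of $G[U_j]$, and once more to each $U_j\cap X_k$ to recognise $f(U_j\cap X_k)$ as the product over those components of $G[U_j]$ that happen to lie inside $X_k$. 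The paper isolates exactly this two-step argument as a separate preliminary lemma (Lemma~\ref{lem:u-split}). With that lemma in hand, the remainder of your argument goes through and coincides with the paper's proof.
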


The recursion (\ref{eq:f-component}) now enables the following 
top-down evaluation strategy for the intermediate values 
in Definition \ref{def:fpoly}. Consider a nonempty $X\subseteq V$.
If $\myI\mygG X$ is not connected, recursively solve 
the intermediate values of each of the vertex sets 
$X_1,X_2,\ldots,X_s$ of the connected components 
$\myI\mygG {X_1},\myI\mygG {X_2},\ldots,\myI\mygG {X_s}$
of $\myI\mygG X$, and assemble 
the solution using (\ref{eq:f-component}). 
Otherwise; that is, if $\myI\mygG X$ is connected, 
recursively solve the intermediate values of each set
$X\setminus\{i\}$, $i\in X$, and assemble the solution 
using Algorithm~U. Call this evaluation strategy Algorithm~C.

Algorithm~C together with \eqref{eq:multivariate-eval-classical}, 
\eqref{eq:fortuin--kasteleyn}, and Lagrangian interpolation 
establishes that the Tutte polynomial $T_G(x,y)$ can be computed 
in time and space $\sigma(G)n^{O(1)}$. 
This proves Theorem \ref{thm:connected-main}(a).

\subsection{An alternative recursion}
\label{section:recurrence}

We derive an alternative recursion for $Z_G(q,r)$ based 
on induced subgraphs and fast subset convolution. 
Let $R$ be a commutative ring. Associate a ring element
$r_e\in R$ with each $e\in E$.
For $k=1,2,\ldots,n$, let
\[
S_G(k,r)=\sum_{\substack{F\subseteq E\\c(F)=k}}\prod_{e\in F}r_e
\]
and observe that $Z_G(q,r)=\sum_{k=1}^n q^kS_G(k,r)$.
Thus, to determine $Z_G(q,r)$, it suffices to compute $S_G(k,r)$
for all $k=1,2,\ldots,n$.

To this end, the values $S_G(k,r)$ can be computed using 
the following recursion over induced subgraphs of $G$.
Let $W\subseteq V$ and consider the subgraph $G[W]$ induced by $W$ in $G$.
Suppose that $S_{G[U]}(k,r)$ has been computed for all 
$\emptyset\neq U\subsetneq W$ and $k=1,2,\ldots,|U|$. 

To compute $S_{G[W]}(k,r)$ for $k=2,3,\ldots,|W|$, observe that 
a disconnected subgraph of $G[W]$ partitions into connected components.
Thus, for $k\geq 2$ we have
\begin{equation}
\label{eq:alt-rec-1}
S_{G[W]}(k,r)=
\frac{1}{k}\sum_{\emptyset\neq U\subsetneq W} 
S_{G[U]}(1,r)S_{G[W\setminus U]}(k-1,r)\,.
\end{equation}

For the connected case, that is, for $k=1$, it suffices to observe that 
we can subtract the disconnected subgraphs from the set of all subgraphs
to obtain the connected graphs; put otherwise,
\begin{equation}
\label{eq:alt-rec-2}
S_{G[W]}(1,r)=\!\!\prod_{e\in E(G[W])}\!\!(1+r_e)-\sum_{k\geq 2} S_{G[W]}(k,r)\,.
\end{equation}

The recursion defined by \eqref{eq:alt-rec-1} and \eqref{eq:alt-rec-2}
can now be evaluated for $|W|=1,2,\ldots,n$ in total $2^nn^{O(1)}$
ring operations using fast subset convolution \cite{BHKK07a}.  As a
technical observation we remark that \eqref{eq:alt-rec-1} assumes that
$k$ has a multiplicative inverse in $R$; this assumption can be
removed, but we omit the details from this extended abstract. We also
note that analogues of Algorithms U and C running in
$\sigma(G)n^{O(1)}$ ring operations can be developed in this context;
we describe an implementation of this in
Appendix~\ref{appendix:implementation}.  However, it is not
immediate whether a polynomial-space algorithm for the Tutte
polynomial can be developed based on \eqref{eq:alt-rec-1} and
\eqref{eq:alt-rec-2}.

%%%%%%%%%%%%%%%%%%%%%%%%%%%%%%%%%%%%%%%%%%%%%%%%%%%%%%%%%%%%%%%% References %%%

\newpage
\renewcommand{\thepage}{\textsc{References~p.~\arabic{page}}}
\setcounter{page}{1}

%%%%%%%%%%%%%%%%%%%%%%%%%%%%%%%%%%%%%%%%%%%%%%%%%%%%%%%%%%%%%%%%%% Appendix %%%

\newpage
\renewcommand{\thepage}{\textsc{Appendix~p.~\arabic{page}}}
\appendix
\setcounter{page}{1}

\noindent
\begin{center}
\textsc{\Large Appendix}
\end{center}

\section{Proofs}
\label{appendix:proofs}

\subsection{Proof of Theorem \ref{thm:fortuin--kasteleyn}}

This proof of the Fortuin--Kasteleyn identity 
\eqref{eq:fortuin--kasteleyn} is well known (e.g.~\cite{Soka05})
and is here included only for convenience of verification.

\begin{Proof}
Expanding the product over $E$ and 
changing the order of summation,
\[
Z_G^{\text{Potts}}(q,r)=
\sum_{s:V\rightarrow \{1,2,\ldots,q\}}
\prod_{e\in E}\bigl(1+r_e\delta_e^s\bigr)=
\sum_{F\subseteq E}
\sum_{s:V\rightarrow \{1,2,\ldots,q\}}
\prod_{e\in F}r_e\delta_e^s\,.
\]
The right-hand side product evaluates to zero unless
$s$ is constant on each connected component of the graph
with vertex set $V$ and edge set $F$. Because there are 
$q$ choices for the value of $s$ on each connected component,
\[
\sum_{F\subseteq E}
\sum_{s:V\rightarrow \{1,2,\ldots,q\}}
\prod_{e\in F}r_e\delta_e^s=
\sum_{F\subseteq E} q^{c(F)} \prod_{e\in F}r_e=Z_G(q,r)\,.
\]
\end{Proof}

\subsection{Proof of Lemma \ref{lem:component}}

It is convenient to start with a preliminary lemma.

\begin{Lem}
\label{lem:u-split}
Let $\myI\mygG {X_1},\myI\mygG {X_2},\ldots,\myI\mygG {X_s}$
be the connected components of $\myI\mygG X$
and let $U\subseteq X$.
Then,
\[
f(U)=f(U\cap X_1)f(U\cap X_2)\cdots f(U\cap X_s)\,.
\]
\end{Lem}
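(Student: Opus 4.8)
The plan is to deduce the factorisation directly from the component-factorisation hypothesis \eqref{eq:component-factorisation} for $f$, applying it not to $X$ but to $U$ and to the pieces $U\cap X_k$. First I would invoke \eqref{eq:component-factorisation} for the set $U$ itself: writing $G[C_1],G[C_2],\ldots,G[C_t]$ for the connected components of $G[U]$, we get $f(U)=f(C_1)f(C_2)\cdots f(C_t)$, where the convention $f(\emptyset)=1$ takes care of the degenerate case $U=\emptyset$.

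Next I would observe that, since $U\subseteq X$, the graph $G[U]$ is an induced subgraph of $G[X]$; hence each $C_j$, which induces a connected subgraph $G[C_j]$ that is a subgraph of $G[X]$, must be contained in exactly one of the components $X_1,\ldots,X_s$ of $G[X]$. Grouping the indices $j\in\{1,\ldots,t\}$ according to which $X_k$ contains $C_j$ gives
\[
f(U)=\prod_{k=1}^{s}\ \prod_{j:\,C_j\subseteq X_k} f(C_j)\,,
\]
using commutativity of $R$ to rearrange the product.

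The key remaining step is to identify each inner product with $f(U\cap X_k)$. For this I would check that the connected components of $G[U\cap X_k]$ are precisely the $C_j$ with $C_j\subseteq X_k$: on the one hand, any such $C_j$ lies in $U\cap X_k$ and is a maximal connected vertex set of $G[U]$, hence also a maximal connected vertex set of the smaller graph $G[U\cap X_k]$; on the other hand, since the $X_k$ partition $X$ and therefore the sets $U\cap X_k$ partition $U$, any component of $G[U\cap X_k]$ is again a maximal connected vertex set of $G[U]$, i.e.\ one of the $C_j$, necessarily with $C_j\subseteq X_k$. Applying \eqref{eq:component-factorisation} to $U\cap X_k$ (with $f(\emptyset)=1$ when the intersection is empty) then yields $\prod_{j:\,C_j\subseteq X_k}f(C_j)=f(U\cap X_k)$, and substituting this into the displayed identity gives $f(U)=f(U\cap X_1)f(U\cap X_2)\cdots f(U\cap X_s)$, as required.

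I expect the only point requiring genuine care—the ``main obstacle'', modest as it is—to be the bookkeeping in that last step: verifying that a connected vertex set of $G[U]$ lies within a single component of $G[X]$, and conversely that intersecting $U$ with a component $X_k$ neither splits nor merges components of the induced subgraph. Once that is settled, the lemma is just a double application of the hypothesis \eqref{eq:component-factorisation} together with commutativity of the ring $R$.
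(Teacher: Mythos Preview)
Your proposal is correct and follows essentially the same approach as the paper's own proof: factor $f(U)$ over the connected components of $G[U]$, observe that each such component lies in a unique $X_k$, regroup the product accordingly, and then apply \eqref{eq:component-factorisation} to each $U\cap X_k$. You are, if anything, slightly more careful than the paper in spelling out why the components of $G[U\cap X_k]$ are exactly the components of $G[U]$ contained in $X_k$; the paper simply notes that these sets partition $U\cap X_k$ and invokes \eqref{eq:component-factorisation} directly.
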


\begin{proof}
Let $G[U_1],G[U_2],\ldots,G[U_t]$ be the connected components
of $G[U]$. Then, by \eqref{eq:component-factorisation},
\[
f(U)=f(U_1)f(U_2)\cdots f(U_t)\,.
\]
Because $U\subseteq X$ holds, for every $U_i$ 
there is a unique $h(i)\in\{1,2,\ldots,s\}$ 
such that $U_i\subseteq X_{h(i)}$. 
Moreover, since $\{U_1,U_2,\ldots,U_t\}$ is a partition of $U$,
we have that $\{U_i:i\in h^{-1}(j)\}$ is a partition of $U\cap X_j$ 
for all $j=1,2,\ldots,s$.
Thus, by \eqref{eq:component-factorisation} we have 
$f(U\cap X_j)=\prod_{i\in h^{-1}(j)} f(U_i)$ for all $j=1,2,\ldots,s$.
In particular, by commutativity of $R$,
\[
f(U)
=\prod_{i=1}^t f(U_i)
=\prod_{j=1}^s\prod_{i\in h^{-1}(j)} f(U_i)
=\prod_{j=1}^sf(U\cap X_j)\,.
\]
\end{proof}

We now proceed with the proof of Lemma \ref{lem:component}.

\begin{proof}
Consider an arbitrary $q$-tuple $(U_1,U_2,\ldots,U_q)$ 
with $U_1,U_2,\ldots,U_q\subseteq X$ and
$\cup_{j=1}^q U_j\myieq{i} X$.
Because $\{X_1,X_2,\ldots,X_s\}$ is a partition of $X$,
we have $\cup_{j=1}^q U_j\myieq{i} X$
if and only if 
$X_k\cap\cup_{j=1}^q U_j\myieq{i} X_k\cap X$ holds
for all $k=1,2,\ldots,s$. Put otherwise, 
we have $\cup_{j=1}^q U_j\myieq{i} X$ if and only if
$\cup_{j=1}^q (X_k\cap U_j)\myieq{i} X_k$ holds
for all $k=1,2,\ldots,s$.
Using Lemma \ref{lem:u-split} for each $U_j$ in turn,
we have, by commutativity of $R$, the unique
factorisation into pairwise intersections 
\[
f(U_1)f(U_2)\ldots f(U_q)=
\prod_{j=1}^q\prod_{k=1}^s f(U_j\cap X_k)=
\prod_{k=1}^s\prod_{j=1}^q f(U_j\cap X_k)\,.
\]
The claim follows because $(U_1,U_2,\ldots,U_q)$ was arbitrary.
\end{proof}

\section{A time--space tradeoff via split transforms}
\label{appendix:split-trans}

This appendix outlines a ``split transform'' algorithm that enables
a time--space tradeoff in evaluating $\bigl((f\zeta)^q\mu\bigr)(V)$
for a given function $f:2^V\rightarrow R$ and $q=1,2,\ldots,n+1$.

Split the ground set $V=\{1,2,\ldots,n\}$ into two parts,  
$V_1\subseteq V$ and $V_2\subseteq V$,
such that $V=V_1\cup V_2$ and $V_1\cap V_2=\emptyset$.
Let $n_1=|V_1|$ and $n_2=|V_2|$.
For a subset $X\subseteq V$, we use subscripts to indicate the parts
of the subset in $V_1$ and $V_2$; that is, we let
$X_1=X\cap V_1$ and $X_2=X\cap V_2$.
It is also convenient to split the function notation accordingly,
that is, we write $f(X_1,X_2)$ for $f(X_1\cup X_2)=f(X)$.
In the context of zeta and Moebius transforms, we use $X$ 
for a subset in the ``spatial'' (original) domain and $Y$ 
for a subset in the ``frequency'' (transformed) domain.

An elementary observation is now that both the zeta and Moebius
transforms split, that is, 
\[
f\zeta(Y)=
\sum_{X\subseteq Y}f(X)=
\sum_{X_1\subseteq Y_1}\sum_{X_2\subseteq Y_2}f(X_1,X_2)=
\sum_{X_1\subseteq Y_1}f\zeta_2(X_1,Y_2)=
f\zeta_2\zeta_1(Y_1,Y_2)
\]
and
\[
\begin{split}
f\mu(X)
&=\sum_{Y\subseteq X}(-1)^{|X\setminus Y|}f(Y)
=\sum_{X_1\subseteq Y_1}(-1)^{|X_1\setminus Y_1|}
 \sum_{X_2\subseteq Y_2}(-1)^{|X_2\setminus Y_2|}
 f(Y_1,Y_2)\\
&=\sum_{X_1\subseteq Y_1}(-1)^{|X_1\setminus Y_1|}f\mu_2(Y_1,X_2)
=f\mu_2\mu_1(X_1,X_2)\,.
\end{split}
\]
Also note that $f\zeta=f\zeta_2\zeta_1=f\zeta_1\zeta_2$ 
and $f\mu=f\mu_2\mu_1=f\mu_1\mu_2$.

To arrive at the split transform algorithm for computing
$\bigl((f\zeta)^q\mu\bigr)(V)$, split the outer Moebius transform
and the inner zeta transform to get
\begin{align*}
  \bigl((f\zeta)^q\mu\bigr)(V)
%  &=\sum_{Y\subseteq V}(-1)^{|V|-|Y|}(f\zeta)^q(Y)\\
  &=\sum_{Y_1\subseteq V_1}(-1)^{|V_1\setminus Y_1|}\sum_{Y_2\subseteq V_2}(-1)^{|V_2\setminus Y_2|}(f\zeta_1\zeta_2(Y_1,Y_2))^q\enspace.
\end{align*}
Now let $Y_1$ be fixed and consider the inner sum.
To evaluate the inner sum for a fixed $Y_1$, it suffices
to have $f\zeta_1\zeta_2(Y_1,Y_2)$ available for each $Y_2\subseteq V_2$.
By definition,
\[
  f\zeta_1\zeta_2(Y_1,Y_2)=\sum_{X_2\subseteq Y_2}f\zeta_1(Y_1,X_2)\,.
\]
Observe that if we have $f\zeta_1(Y_1,X_2)$ stored for each 
$X_2\subseteq V_2$, then we can evaluate $f\zeta_1\zeta_2(Y_1,Y_2)$ 
for each $Y_2\subseteq V_2$ simultaneously using the fast zeta transform.
This takes in total at most $2^{n_2}n_2$ ring operations and 
requires one to store at most $2^{n_2}n_2$ ring elements.

For fixed $Y_1$ and $X_2$, we can evaluate and store
\[
  f\zeta_1(Y_1,X_2)=\sum_{X_1\subseteq Y_1}f(X_1,X_2)
\]
by plain summation in at most $2^{|Y_1|}$ ring operations.
Thus, for fixed $Y_1$, we can evaluate $f\zeta_1(Y_1,X_2)$ 
for each $X_2\subseteq V_2$ in total at most 
$2^{|Y_1|}2^{n_2}$ ring operations.

Considering each $Y_1\subseteq V_1$ in turn, 
we can thus evaluate $\bigl((f\zeta)^q\mu\bigr)(V)$ by
storing at most $2^{n_2}n_2$ ring
elements and executing at most
\[
  n^{O(1)}\sum_{Y_1\subseteq V_1}(2^{n_2}n_2+2^{|Y_1|}2^{n_2})
   =n^{O(1)}(3^{n_1}+2^{n_1}n_2)2^{n_2}
\]
ring operations. 
This completes the description and analysis of 
the split transform algorithm.

The split transform algorithm together with 
\eqref{eq:multivariate-eval-classical}, 
\eqref{eq:fortuin--kasteleyn}, and Lagrangian interpolation 
proves Theorem \ref{thm:connected-main}(c).

\section{The cover polynomial}
\label{sec: cover}

Let $D$ be a digraph with vertex set $V=\{1,2,\ldots,n\}$.
Note that $D$ may have parallel edges and loops. 
We assume that the number of edges is $n^{O(1)}$.
Denote by $c_D(i,j)$ the number of ways of disjointly
covering all the vertices of $D$ with $i$ directed paths
and $j$ directed cycles.
The \emph{cover polynomial} is defined as
\[
C_D(x,y)=\sum_{i,j}c_D(i,j)x^{\underline i}y^j\,,
\] 
where $x^{\underline i}=x(x-1)\cdots(x-i+1)$ and $x^{\underline 0}=1$.
It is known that $C_D(x,y)$ is
\#P-complete to evaluate except at a handful of points $(x,y)$ \cite{BD07}.

In analogy to Theorem~\ref{thm:connected-main}, we can show that $C_D$
can be computed in vertex-exponential time:

\vspace*{-0.1mm}
\begin{Thm}
\label{thm:cover}
  The cover polynomial of an $n$-vertex directed graph can be computed 
\begin{enumerate}
\item[(a)] in time and space $2^nn^{O(1)}$; and
\item[(b)] in time $3^nn^{O(1)}$ and polynomial space.
\end{enumerate}
\end{Thm}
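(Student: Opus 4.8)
The plan is to compute all the coefficients $c_D(i,j)$; once these are known, $C_D(x,y)=\sum_{i,j}c_D(i,j)x^{\underline i}y^j$ is written down directly, and if the ordinary monomial basis is wanted one expands $x^{\underline i}=\sum_k s(i,k)x^k$ with the signed Stirling numbers of the first kind, tabulated in $n^{O(1)}$ time. Equivalently, it suffices to compute the ``geometric'' generating polynomial $\tilde C_D(p,y)=\sum_{i,j}c_D(i,j)p^iy^j$ over formal indeterminates $p,y$, all arithmetic taking place over $\mathbb{Q}[p,y,z]$ (a bookkeeping indeterminate $z$ is adjoined exactly as in \S\ref{section:tutte-potts}), every element that occurs being of $n^{O(1)}$ size. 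The starting observation is that covering $V$ disjointly with $i$ directed paths and $j$ directed cycles is precisely the same thing as partitioning $V$ into $i+j$ blocks, $i$ of them equipped with a directed Hamiltonian path of their induced subdigraph and $j$ with a directed Hamiltonian cycle. Writing $\pi(Y)$ and $\kappa(Y)$ for the numbers of directed Hamiltonian paths and cycles of $D[Y]$ (a single vertex counting as a trivial path, a loop as a $1$-cycle), and putting $f(Y)=p\,\pi(Y)+y\,\kappa(Y)$ for $\emptyset\neq Y\subseteq V$ and $f(\emptyset)=0$, this makes $\tilde C_D(p,y)$ equal to $\sum\prod_k f(B_k)$ over all set partitions $\{B_1,\dots,B_r\}$ of $V$. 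Exactly as in \S\ref{section:tutte-potts}, setting $f_z(Y)=f(Y)z^{|Y|}$ the identity \eqref{eq:fcovering} gives $[z^{|V|}]\bigl((f_z\zeta)^r\mu\bigr)(V)=\sum_{(B_1,\dots,B_r)}\prod_k f(B_k)$, the sum now over ordered tuples forming a partition of $V$ into $r$ nonempty parts, whence $\tilde C_D(p,y)=\sum_{r=0}^n\tfrac1{r!}\,[z^{|V|}]\bigl((f_z\zeta)^r\mu\bigr)(V)$.

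For part~(a) I would first compute $\pi(Y)$ and $\kappa(Y)$ for every $Y\subseteq V$ by the textbook $2^nn^{O(1)}$-time, $2^nn^{O(1)}$-space dynamic program over subsets: for $\pi$, maintain $T[Y,v]=[Y=\{v\}]+\sum_{u\in Y\setminus\{v\},\,(u,v)\in E}T[Y\setminus\{v\},u]$ and set $\pi(Y)=\sum_{v\in Y}T[Y,v]$, proceeding analogously for $\kappa$ with a pinned base vertex (loops and parallel edges enter as multiplicities). Then evaluate $\bigl((f_z\zeta)^r\mu\bigr)(V)$ for $r=1,\dots,n$ by the fast zeta and Möbius transforms on $f_z$, just as the paper evaluates the Potts partition function in \S\ref{section:tutte-potts} (alternatively, compute the set-partition sum in one shot by the fast subset convolution of~\cite{BHKK07a}). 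This yields $\tilde C_D$, hence $C_D$, in time and space $2^nn^{O(1)}$.

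For part~(b) I would instead use $\bigl((f_z\zeta)^r\mu\bigr)(V)=\sum_{X\subseteq V}(-1)^{|V\setminus X|}\bigl(f_z\zeta(X)\bigr)^r$, accumulating over $X$ one at a time so that only $n^{O(1)}$ ring elements (the running sums for $r=0,1,\dots,n$) are ever stored. The point that needs care---and the main obstacle---is that $f_z\zeta(X)=\sum_{Y\subseteq X}f(Y)z^{|Y|}$ must be computable for a \emph{single} $X$ in $2^{|X|}n^{O(1)}$ time and polynomial space, even though $\pi$ and $\kappa$ are individually \#P-hard, so one cannot just sum over all $Y\subseteq X$. The resolution is that $\sum_{Y\subseteq X}\pi(Y)z^{|Y|}$ is the $z$-length generating function of all directed simple paths inside $D[X]$, and it collapses to a \emph{one-layer} inclusion--exclusion of Karp--Bax type~\cite{Karp82,KGK77}: starting from $\pi(Y)=\sum_{Z\subseteq Y}(-1)^{|Y|-|Z|}W_{|Y|-1}(Z)$, where $W_\ell(Z)$ is the polynomial-time computable number of length-$\ell$ walks in $D[Z]$, and swapping the order of summation,
\[
\sum_{Y\subseteq X}\pi(Y)z^{|Y|}=\sum_{Z\subseteq X}z^{|Z|}\sum_{t=0}^{|X|-|Z|}\binom{|X|-|Z|}{t}(-z)^t\,W_{|Z|+t-1}(Z)\,,
\]
a single sum over $Z\subseteq X$ with $n^{O(1)}$ work per $Z$; an analogous identity with traces of powers of the adjacency matrix of $D[Z]$ in place of $W$ (and a harmless division by the block size) handles $\sum_{Y\subseteq X}\kappa(Y)z^{|Y|}$. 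Hence each $X$ costs $2^{|X|}n^{O(1)}$, and the total is $\sum_{X\subseteq V}2^{|X|}n^{O(1)}=3^nn^{O(1)}$ in polynomial space.

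Summing up, the two steps I expect to demand the most attention are the polynomial-space evaluation of $f_z\zeta(X)$ just outlined---the $3^n$ bound hinges on keeping that inclusion--exclusion single-layered rather than nested---and the bookkeeping that keeps path-blocks and cycle-blocks, and falling factorials versus ordinary powers, straight; the latter is sidestepped by computing $\tilde C_D$ first and converting to $C_D$ afterwards. Routine points still to be checked are the precise treatment of loops and parallel edges inside $\pi$ and $\kappa$, and that the coefficient field has characteristic larger than $n$ so that the divisions by $r!$ and by block sizes are legitimate (working over $\mathbb{Q}$, or over $\mathbb{Z}$ after verifying the divisions are exact, suffices).
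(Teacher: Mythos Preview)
Your proposal is correct and follows essentially the same route as the paper: both express the coefficients via inclusion--exclusion over subsets $U\subseteq V$ applied to zeta transforms of the Hamiltonian path and cycle counts, and for polynomial space both invoke the Karp--Kohn walk-counting identity and swap the order of summation so that the nested inclusion--exclusion collapses to a single sum over $Z\subseteq X$ with $n^{O(1)}$ work per $Z$. Your packaging differs only cosmetically---you bundle paths and cycles into a single $f=p\,\pi+y\,\kappa$ and sum over the total block count $r$, whereas the paper keeps $P$ and $C$ separate and indexes by $(i,j)$; for part~(a) you use Held--Karp dynamic programming where the paper uses fast Moebius inversion of walk counts; and for cycles you use traces with a division by block size where the paper anchors at $\min S$---but the underlying argument is the same.
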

\vspace*{-0.1mm}

The proof involves several inclusion--exclusion-based arguments with
different purposes and in a nested fashion, so we first give a
high-level overview of the concepts involved. One readily observes
that the cover polynomial can be expressed as a sum over partitionings
of the vertex set, each vertex subset appropriately weighted, so the
inclusion--exclusion technique \cite{BHK07} applies. Computing the
weights for all possible vertex subsets is again a hard problem, but
the fast Moebius inversion algorithm \cite{BHKK08} can be used to
compute the necessary values beforehand.  This leads to an
exponential-space algorithm.  Finally, to use inclusion--exclusion to
reduce the space to polynomial \cite{Karp82,KGK77}, we
apply the mentioned transforms in a nested manner and switch the order
of certain involved summations.

\medskip We turn to the details of the proof.
For $X\subseteq V$, denote by $p(X)$ the number of spanning
directed paths in $D[X]$, and denote by $c(X)$ 
the number of spanning directed cycles in $D[X]$.
Define $p(\emptyset)=c(\emptyset)=0$.
Note that for all $x\in V$ we have $p(\{x\})=1$ 
and that $c(\{x\})$ is the number of loops incident with $x$.

By definition, 
\[
	c_D(i,j)=\frac{1}{i!j!}\sum_{X_1, X_2, \ldots, X_i, Y_1, Y_2, \ldots, Y_j}
	p(X_1)p(X_2) \cdots p(X_i)\, c(Y_1)c(Y_2) \cdots c(Y_j)\,,
\]
where we sum over all $(i+j)$-tuples   
$(X_1, X_2, \ldots, X_i, Y_1, Y_2, \ldots, Y_j)$ 
such that $\{X_1, X_2, \ldots, X_i,\allowbreak Y_1, Y_2, \ldots, Y_j\}$ 
is a partition of $V$.

We next derive an alternative expression  using the principle of inclusion 
and exclusion. To this end, it is convenient to define for every 
$U \subseteq V$ the polynomials  
\[
	P(U; z) = \sum_{X \subseteq U} p(X)z^{|X|}
	\quad
	\textrm{and}
	\quad
	C(U; z) = \sum_{X \subseteq U} c(X)z^{|X|}
\]
in an indeterminate $z$;  
if viewed as set functions, $P(U; z)$ and $C(U; z)$ 
are zeta transforms of the set functions $p(X) z^{|X|}$ and 
$c(X) z^{|X|}$, respectively. 
We can now write 
\[
	c_D(i,j)=\frac{1}{i!j!}\sum_{U \subseteq V} 
	(-1)^{|V\setminus U|} \bigl\{z^n\bigr\}
	\bigl(P(U; z)^i C(U; z)^j\bigr)\,.
\]

It remains to show how to compute the $p(X)$ and $c(X)$ for all $X \subseteq V$.
For $S \subseteq V$ let $w(S, s, t, \ell)$ denote the number of directed walks
of length $\ell$ from vertex $s$ to vertex $t$ in $D[S]$; define  
$w(S, s, t, \ell) = 0$ if $s \not\in S$ or $t \not\in S$. 
By inclusion--exclusion, again,
\[
	p(X) = \sum_{1\leq s \leq t \leq n} \sum_{S \subseteq X} 
	(-1)^{|X \setminus S|} w(S, s, t, |X|-1)\,.
\]
Similarly, 
\[
	c(X) = \sum_{S \subseteq X} 
	(-1)^{|X \setminus S|} w(S, s, s, |X|)\,, 
	\quad \textrm{where }\, s = \min S\,.
\]

Observing that $w(S,s,t,\ell)$ can be computed in time $n^{O(1)}$,
we have that $c_D(i,j)$ can be computed in space $n^{O(1)}$
and time $4^nn^{O(1)}$.

To get an algorithm running in $3^nn^{O(1)}$ time and $n^{O(1)}$ space, 
observe that 
\[
  P(U;z)=\sum_{S\subseteq U} P(U,S;z)
\]
where
\[
  P(U,S;z)=\sum_{1\leq s\leq t\leq n}\sum_{k=0}^{|U\setminus S|}\binom{|U\setminus S|}{k}(-1)^kz^{|S|+k} w(S,s,t,|S|+k-1)
\]
and
\[
  C(U;z)=\sum_{S\subseteq U} C(U,S;z)
\]
where
\[
  C(U,S;z)=\sum_{k=0}^{|U\setminus S|}\binom{|U\setminus S|}{k}(-1)^kz^{|S|+k} w(S,s,s,|S|+k)\,, 
	\quad \textrm{where }\, s = \min S\,.
\]
This establishes part (b) of the theorem.

For part (a), we show how to evaluate $c_D(i,j)$ in time and
space $2^nn^{O(1)}$. Namely, $p$ and $c$ can be computed in time and
space $2^nn^{O(1)}$ via fast Moebius inversion. Given $p$ and $c$, the
polynomials $P$ and $C$ can be computed in time and space
$2^nn^{O(1)}$ via fast zeta transform. And finally, given $P$ and $C$,
the inclusion--exclusion expression of $c_D(i,j)$ can be evaluated in
time $2^nn^{O(1)}$.

\section{Tutte polynomials of concrete graphs}
\label{appendix:implementation}

\setcounter{MaxMatrixCols}{25}

\subsection{Algorithm implementation}

Our implementation of the algorithm described in
\S\ref{section:recurrence} uses a number of extra techniques to reduce
the polynomial factors in the time and memory requirements.  In what
follows we assume that $\mygG$ is a connected graph.
\begin{enumerate}
\item
The coefficients $t_{ij}$ of the Tutte polynomial are 
computed modulo a small integer $p$; the computation is repeated 
for sufficiently many different (pairwise coprime) $p$ to enable
recovery of the coefficients via the Chinese Remainder Theorem. 
The number of different $p$ required is determined based on the
available word length and using $\tau(\mygG)$ (computed via the
Matrix--Tree Theorem) as an upper bound for the coefficients.
\item
To save a factor of $m$ in memory, instead of direct computation 
with bivariate polynomials, we compute with univariate 
evaluations of the polynomials at $z=0,1,\ldots,m$, and finally recover 
only the necessary bivariate polynomials from the evaluations via 
Lagrange interpolation. 
\item To save a further factor of $n^2$ in memory, we execute the
  analogue of Algorithm~U for subsets $X$ in a specific order, namely
  in the lexicographic order.  This enables efficient ``in-place''
  computation of the polynomials $F(X,k,i)$ so that, for each $X$, the
  polynomials $F(X,k,i)$ need to be stored only for one value of $i$
  at the time.  Furthermore, we never need all $F(X,k,i)$ for
  $k=2,3,\ldots,n$ explicitly, only a linear combination of them, so
  we count with this instead; however, we omit the details in this
  abstract.
\end{enumerate}

The source code of the algorithm implementation is available by request.
The implementation uses the GNU Multiple Precision Arithmetic
library $\langle$\verb|http://gmplib.org/|$\rangle$ for computation
with large integers. The computed coefficients $t_{ij}$ are checked 
for consistency by verifying that $\sum_{i,j} t_{ij} = \tau(\mygG)$ 
and that $\sum_{i,j} 2^{i+j} t_{ij} = 2^m$.

\subsection{Performance}

The current algorithm implementation uses roughly $2^{n+1}n$ words of
memory for an $n$-vertex graph, which presents a basic obstacle to
practical performance. For example, the practical limit is at $n=25$,
assuming 32 GB of main memory and 64-bit words. This makes our
polynomial space and time--space tradeoff algorithms from
Theorem~\ref{thm:connected-main}(b,c) interesting also from a
practical perspective. At the time of writing, we have implemented the
former, but not yet performed large-scale experiments with it.

In terms of running time, the complete graph $K_n$ presents the worst
case for $n$-vertex inputs for our algorithm. On a 3.66GHz Intel Xeon
CPU with 1MB cache, computing the Tutte polynomial of $K_{17}$ takes
less than an hour, $K_{18}$ takes about three hours, and $K_{22}$
takes 96 hours.  In comparison, both deletion--contraction and
spanning tree enumeration cease to be practical well below this; for
example, $\tau(K_{22})=705429498686404044207947776$ and
$\tau(K_{16})=72057594037927936$; a survey of how to compute $T_\mygG$
in practice \cite{Imai00} reports running times for the complete graph
$K_{14}$ in hours. The fastest current program to compute Tutte
polynomials \cite{HPR} is also based on deletion--contraction with
isomorphism rejection, but uses many other ideas as well. It processes
$K_{14}$ and many sparse graphs with far larger $n$ in a few seconds,
but also ceases to be practical for some dense graphs with $n=16$, see
Figure~\ref{fig: bhkk-vs-hpr}.

\begin{figure}
  \includegraphics{bhkk-vs-hpr.1}
  \caption{\label{fig: bhkk-vs-hpr}Running times for complements
    of random 4-regular graphs. The lines show averages of 5 runs on a
    3.66GHz Intel Xeon CPU with 1MB cache.  
    The thin line is our algorithm; the thick
    line is the algorithm of Haggard, Pearce, and Royle \cite{HPR}.}
\end{figure}

Two further remarks are in order.  First, for (connected) graphs with
a small $\tau(\mygG)$, enumeration of spanning trees is faster than
our algorithm.  Second, graphs with fewer edges are faster to solve
using our algorithm. For example, a 3-regular graph on 22 vertices can
be solved in about five hours.

\subsection{Tutte polynomials of some concrete graphs}

Even though few readers are likely to derive any insight from the fact
that the coefficient of $x^2y^2$ in the Tutte polynomial of
Loupekine's Second Snark is $991226$, we feel it germane to our paper
to actually compute some Tutte polynomials.  We include tables of the
nonzero coefficients $t_{ij}$ in the
expansion~\eqref{eq:spanning-tree-expansion} for a number of graphs.
Among these, the values for the Petersen graph are well known
\cite[\S13b]{Bigg93} and are included here for verification only.  For
reference, we present the Tutte polynomials of a few other well-known
graphs, mostly snarks and cages; however, these graphs are fairly
sparse and exhibit symmetries that make them amenable to many of the
previously existing techniques.  An entertaining example that tests
the liminations of our current implementation is from Knuth's Stanford
Graph Base \cite{SGB}, based on the encounters between the 23 most
important characters in Twain's \emph{Huckleberry Finn}.  This graph
has 23 vertices, 88 edges, and 54540490752786432 spanning trees; the
required solution time is about 50 hours.

\newcommand{%
\par\medskip
{\bf }\ \par\nopagebreak
\begin{minipage}[t]{.2\textwidth}\vspace{0pt}
\includegraphics{graphs.}
\end{minipage}
{\fontsize{5}{6}\selectfont \sf
\input{data/.tex}
}
}[3]{%
\par\medskip
{\bf #2}\ \par\nopagebreak
\begin{minipage}[t]{.2\textwidth}\vspace{0pt}
\includegraphics{graphs.#1}
\end{minipage}
{\fontsize{5}{6}\selectfont \sf
\input{data/#3.tex}
}
}

\raggedright
\raggedbottom
\par\medskip
{\bf Petersen Graph}\ \par\nopagebreak
\begin{minipage}[t]{.2\textwidth}\vspace{0pt}
\includegraphics{graphs.7}
\end{minipage}
{\fontsize{5}{6}\selectfont \sf
\begin{tabular}[t]{r}\sl j =\,0\\\hline
0
\\
36
\\
120
\\
180
\\
170
\\
114
\\
56
\\
21
\\
6
\\
1
\\
\end{tabular}\allowbreak
\begin{tabular}[t]{r}\sl j =\,1\\\hline
36
\\
168
\\
240
\\
170
\\
70
\\
12
\\
\end{tabular}\allowbreak
\begin{tabular}[t]{r}\sl j =\,2\\\hline
84
\\
171
\\
105
\\
30
\\
\end{tabular}\allowbreak
\begin{tabular}[t]{r}\sl j =\,3\\\hline
75
\\
65
\\
15
\\
\end{tabular}\allowbreak
\begin{tabular}[t]{r}\sl j =\,4\\\hline
35
\\
10
\\
\end{tabular}\allowbreak
\begin{tabular}[t]{r}\sl j =\,5\\\hline
9
\\
\end{tabular}\allowbreak
\begin{tabular}[t]{r}\sl j =\,6\\\hline
1
\\
\end{tabular}\allowbreak

}

\par\medskip
{\bf Dodecahedron}\ \par\nopagebreak
\begin{minipage}[t]{.2\textwidth}\vspace{0pt}
\includegraphics{graphs.1}
\end{minipage}
{\fontsize{5}{6}\selectfont \sf
\begin{tabular}[t]{r}\sl j =\,0\\\hline
0
\\
4\,412
\\
25\,714
\\
72\,110
\\
131\,380
\\
176\,968
\\
189\,934
\\
170\,690
\\
132\,920
\\
91\,740
\\
56\,852
\\
31\,792
\\
16\,016
\\
7\,216
\\
2\,871
\\
989
\\
286
\\
66
\\
11
\\
1
\\
\end{tabular}\allowbreak
\begin{tabular}[t]{r}\sl j =\,1\\\hline
4\,412
\\
38\,864
\\
128\,918
\\
245\,880
\\
320\,990
\\
316\,256
\\
250\,692
\\
167\,140
\\
96\,400
\\
48\,710
\\
21\,530
\\
8\,198
\\
2\,610
\\
660
\\
120
\\
12
\\
\end{tabular}\allowbreak
\begin{tabular}[t]{r}\sl j =\,2\\\hline
17\,562
\\
95\,646
\\
218\,682
\\
295\,915
\\
275\,910
\\
193\,791
\\
108\,884
\\
50\,850
\\
19\,980
\\
6\,510
\\
1\,674
\\
306
\\
30
\\
\end{tabular}\allowbreak
\begin{tabular}[t]{r}\sl j =\,3\\\hline
30\,686
\\
115\,448
\\
185\,071
\\
174\,870
\\
112\,365
\\
53\,350
\\
19\,810
\\
5\,870
\\
1\,350
\\
220
\\
20
\\
\end{tabular}\allowbreak
\begin{tabular}[t]{r}\sl j =\,4\\\hline
31\,540
\\
82\,550
\\
90\,860
\\
57\,735
\\
24\,140
\\
7\,175
\\
1\,620
\\
270
\\
30
\\
\end{tabular}\allowbreak
\begin{tabular}[t]{r}\sl j =\,5\\\hline
21\,548
\\
38\,322
\\
27\,825
\\
11\,230
\\
2\,775
\\
468
\\
60
\\
\end{tabular}\allowbreak
\begin{tabular}[t]{r}\sl j =\,6\\\hline
10\,439
\\
12\,046
\\
5\,390
\\
1\,240
\\
140
\\
12
\\
\end{tabular}\allowbreak
\begin{tabular}[t]{r}\sl j =\,7\\\hline
3\,693
\\
2\,542
\\
610
\\
60
\\
\end{tabular}\allowbreak
\begin{tabular}[t]{r}\sl j =\,8\\\hline
950
\\
330
\\
30
\\
\end{tabular}\allowbreak
\begin{tabular}[t]{r}\sl j =\,9\\\hline
170
\\
20
\\
\end{tabular}\allowbreak
\begin{tabular}[t]{r}\sl j =\,10\\\hline
19
\\
\end{tabular}\allowbreak
\begin{tabular}[t]{r}\sl j =\,11\\\hline
1
\\
\end{tabular}\allowbreak

}

\par\medskip
{\bf Icosahedron}\ \par\nopagebreak
\begin{minipage}[t]{.2\textwidth}\vspace{0pt}
\includegraphics{graphs.11}
\end{minipage}
{\fontsize{5}{6}\selectfont \sf
\begin{tabular}[t]{r}\sl j =\,0\\\hline
0
\\
4\,412
\\
17\,562
\\
30\,686
\\
31\,540
\\
21\,548
\\
10\,439
\\
3\,693
\\
950
\\
170
\\
19
\\
1
\\
\end{tabular}\allowbreak
\begin{tabular}[t]{r}\sl j =\,1\\\hline
4\,412
\\
38\,864
\\
95\,646
\\
115\,448
\\
82\,550
\\
38\,322
\\
12\,046
\\
2\,542
\\
330
\\
20
\\
\end{tabular}\allowbreak
\begin{tabular}[t]{r}\sl j =\,2\\\hline
25\,714
\\
128\,918
\\
218\,682
\\
185\,071
\\
90\,860
\\
27\,825
\\
5\,390
\\
610
\\
30
\\
\end{tabular}\allowbreak
\begin{tabular}[t]{r}\sl j =\,3\\\hline
72\,110
\\
245\,880
\\
295\,915
\\
174\,870
\\
57\,735
\\
11\,230
\\
1\,240
\\
60
\\
\end{tabular}\allowbreak
\begin{tabular}[t]{r}\sl j =\,4\\\hline
131\,380
\\
320\,990
\\
275\,910
\\
112\,365
\\
24\,140
\\
2\,775
\\
140
\\
\end{tabular}\allowbreak
\begin{tabular}[t]{r}\sl j =\,5\\\hline
176\,968
\\
316\,256
\\
193\,791
\\
53\,350
\\
7\,175
\\
468
\\
12
\\
\end{tabular}\allowbreak
\begin{tabular}[t]{r}\sl j =\,6\\\hline
189\,934
\\
250\,692
\\
108\,884
\\
19\,810
\\
1\,620
\\
60
\\
\end{tabular}\allowbreak
\begin{tabular}[t]{r}\sl j =\,7\\\hline
170\,690
\\
167\,140
\\
50\,850
\\
5\,870
\\
270
\\
\end{tabular}\allowbreak
\begin{tabular}[t]{r}\sl j =\,8\\\hline
132\,920
\\
96\,400
\\
19\,980
\\
1\,350
\\
30
\\
\end{tabular}\allowbreak
\begin{tabular}[t]{r}\sl j =\,9\\\hline
91\,740
\\
48\,710
\\
6\,510
\\
220
\\
\end{tabular}\allowbreak
\begin{tabular}[t]{r}\sl j =\,10\\\hline
56\,852
\\
21\,530
\\
1\,674
\\
20
\\
\end{tabular}\allowbreak
\begin{tabular}[t]{r}\sl j =\,11\\\hline
31\,792
\\
8\,198
\\
306
\\
\end{tabular}\allowbreak
\begin{tabular}[t]{r}\sl j =\,12\\\hline
16\,016
\\
2\,610
\\
30
\\
\end{tabular}\allowbreak
\begin{tabular}[t]{r}\sl j =\,13\\\hline
7\,216
\\
660
\\
\end{tabular}\allowbreak
\begin{tabular}[t]{r}\sl j =\,14\\\hline
2\,871
\\
120
\\
\end{tabular}\allowbreak
\begin{tabular}[t]{r}\sl j =\,15\\\hline
989
\\
12
\\
\end{tabular}\allowbreak
\begin{tabular}[t]{r}\sl j =\,16\\\hline
286
\\
\end{tabular}\allowbreak
\begin{tabular}[t]{r}\sl j =\,17\\\hline
66
\\
\end{tabular}\allowbreak
\begin{tabular}[t]{r}\sl j =\,18\\\hline
11
\\
\end{tabular}\allowbreak
\begin{tabular}[t]{r}\sl j =\,19\\\hline
1
\\
\end{tabular}\allowbreak

}

\par\medskip
{\bf Chv\'atal Graph}\ \par\nopagebreak
\begin{minipage}[t]{.2\textwidth}\vspace{0pt}
\includegraphics{graphs.9}
\end{minipage}
{\fontsize{5}{6}\selectfont \sf
\begin{tabular}[t]{r}\sl j =\,0\\\hline
0
\\
1\,994
\\
7\,427
\\
12\,339
\\
12\,360
\\
8\,445
\\
4\,191
\\
1\,559
\\
438
\\
91
\\
13
\\
1
\\
\end{tabular}\allowbreak
\begin{tabular}[t]{r}\sl j =\,1\\\hline
1\,994
\\
12\,782
\\
25\,604
\\
26\,004
\\
15\,865
\\
6\,216
\\
1\,572
\\
240
\\
17
\\
\end{tabular}\allowbreak
\begin{tabular}[t]{r}\sl j =\,2\\\hline
7\,349
\\
25\,969
\\
32\,754
\\
20\,914
\\
7\,568
\\
1\,552
\\
158
\\
4
\\
\end{tabular}\allowbreak
\begin{tabular}[t]{r}\sl j =\,3\\\hline
12\,626
\\
28\,952
\\
24\,116
\\
9\,804
\\
2\,040
\\
184
\\
2
\\
\end{tabular}\allowbreak
\begin{tabular}[t]{r}\sl j =\,4\\\hline
14\,115
\\
22\,250
\\
12\,508
\\
3\,166
\\
319
\\
4
\\
\end{tabular}\allowbreak
\begin{tabular}[t]{r}\sl j =\,5\\\hline
11\,903
\\
13\,164
\\
4\,882
\\
672
\\
17
\\
\end{tabular}\allowbreak
\begin{tabular}[t]{r}\sl j =\,6\\\hline
8\,140
\\
6\,202
\\
1\,386
\\
72
\\
\end{tabular}\allowbreak
\begin{tabular}[t]{r}\sl j =\,7\\\hline
4\,642
\\
2\,292
\\
258
\\
\end{tabular}\allowbreak
\begin{tabular}[t]{r}\sl j =\,8\\\hline
2\,211
\\
636
\\
24
\\
\end{tabular}\allowbreak
\begin{tabular}[t]{r}\sl j =\,9\\\hline
869
\\
120
\\
\end{tabular}\allowbreak
\begin{tabular}[t]{r}\sl j =\,10\\\hline
274
\\
12
\\
\end{tabular}\allowbreak
\begin{tabular}[t]{r}\sl j =\,11\\\hline
66
\\
\end{tabular}\allowbreak
\begin{tabular}[t]{r}\sl j =\,12\\\hline
11
\\
\end{tabular}\allowbreak
\begin{tabular}[t]{r}\sl j =\,13\\\hline
1
\\
\end{tabular}\allowbreak

}

\par\medskip
{\bf Clebsch Graph}\ \par\nopagebreak
\begin{minipage}[t]{.2\textwidth}\vspace{0pt}
\includegraphics{graphs.8}
\end{minipage}
{\fontsize{5}{6}\selectfont \sf
\begin{tabular}[t]{r}\sl j =\,0\\\hline
0
\\
1\,872\,172
\\
7\,870\,034
\\
15\,033\,470
\\
17\,576\,840
\\
14\,236\,468
\\
8\,544\,936
\\
3\,958\,696
\\
1\,451\,495
\\
427\,155
\\
101\,355
\\
19\,283
\\
2\,885
\\
325
\\
25
\\
1
\\
\end{tabular}\allowbreak
\begin{tabular}[t]{r}\sl j =\,1\\\hline
1\,872\,172
\\
15\,110\,476
\\
38\,438\,772
\\
51\,332\,560
\\
43\,215\,300
\\
25\,097\,376
\\
10\,555\,976
\\
3\,293\,168
\\
765\,300
\\
130\,280
\\
15\,488
\\
1\,152
\\
40
\\
\end{tabular}\allowbreak
\begin{tabular}[t]{r}\sl j =\,2\\\hline
9\,112\,614
\\
43\,880\,542
\\
79\,492\,384
\\
78\,503\,860
\\
49\,009\,780
\\
20\,801\,316
\\
6\,205\,768
\\
1\,305\,736
\\
187\,860
\\
16\,860
\\
720
\\
\end{tabular}\allowbreak
\begin{tabular}[t]{r}\sl j =\,3\\\hline
21\,717\,820
\\
75\,108\,240
\\
103\,270\,060
\\
78\,511\,920
\\
37\,661\,120
\\
12\,109\,800
\\
2\,654\,560
\\
386\,480
\\
34\,000
\\
1\,360
\\
\end{tabular}\allowbreak
\begin{tabular}[t]{r}\sl j =\,4\\\hline
34\,847\,530
\\
93\,048\,150
\\
101\,400\,130
\\
61\,562\,510
\\
23\,461\,820
\\
5\,860\,600
\\
946\,240
\\
90\,320
\\
3\,860
\\
\end{tabular}\allowbreak
\begin{tabular}[t]{r}\sl j =\,5\\\hline
43\,384\,468
\\
93\,485\,328
\\
83\,435\,332
\\
41\,488\,560
\\
12\,724\,460
\\
2\,443\,840
\\
275\,672
\\
14\,800
\\
140
\\
\end{tabular}\allowbreak
\begin{tabular}[t]{r}\sl j =\,6\\\hline
45\,431\,208
\\
81\,408\,316
\\
60\,699\,616
\\
24\,896\,600
\\
6\,070\,620
\\
859\,832
\\
61\,328
\\
1\,360
\\
\end{tabular}\allowbreak
\begin{tabular}[t]{r}\sl j =\,7\\\hline
42\,011\,212
\\
63\,725\,936
\\
39\,921\,392
\\
13\,374\,520
\\
2\,514\,620
\\
245\,520
\\
9\,200
\\
\end{tabular}\allowbreak
\begin{tabular}[t]{r}\sl j =\,8\\\hline
35\,302\,105
\\
45\,628\,390
\\
23\,866\,000
\\
6\,392\,880
\\
886\,920
\\
54\,000
\\
720
\\
\end{tabular}\allowbreak
\begin{tabular}[t]{r}\sl j =\,9\\\hline
27\,382\,885
\\
30\,079\,420
\\
12\,946\,480
\\
2\,691\,440
\\
259\,240
\\
8\,240
\\
\end{tabular}\allowbreak
\begin{tabular}[t]{r}\sl j =\,10\\\hline
19\,759\,258
\\
18\,276\,422
\\
6\,341\,100
\\
983\,800
\\
60\,100
\\
672
\\
\end{tabular}\allowbreak
\begin{tabular}[t]{r}\sl j =\,11\\\hline
13\,306\,232
\\
10\,217\,568
\\
2\,783\,100
\\
305\,640
\\
10\,220
\\
\end{tabular}\allowbreak
\begin{tabular}[t]{r}\sl j =\,12\\\hline
8\,367\,140
\\
5\,236\,520
\\
1\,082\,640
\\
78\,080
\\
1\,080
\\
\end{tabular}\allowbreak
\begin{tabular}[t]{r}\sl j =\,13\\\hline
4\,907\,540
\\
2\,446\,480
\\
367\,440
\\
15\,520
\\
40
\\
\end{tabular}\allowbreak
\begin{tabular}[t]{r}\sl j =\,14\\\hline
2\,678\,480
\\
1\,033\,720
\\
106\,320
\\
2\,160
\\
\end{tabular}\allowbreak
\begin{tabular}[t]{r}\sl j =\,15\\\hline
1\,355\,496
\\
390\,712
\\
25\,320
\\
160
\\
\end{tabular}\allowbreak
\begin{tabular}[t]{r}\sl j =\,16\\\hline
632\,942
\\
130\,088
\\
4\,680
\\
\end{tabular}\allowbreak
\begin{tabular}[t]{r}\sl j =\,17\\\hline
270\,930
\\
37\,320
\\
600
\\
\end{tabular}\allowbreak
\begin{tabular}[t]{r}\sl j =\,18\\\hline
105\,400
\\
8\,920
\\
40
\\
\end{tabular}\allowbreak
\begin{tabular}[t]{r}\sl j =\,19\\\hline
36\,840
\\
1\,680
\\
\end{tabular}\allowbreak
\begin{tabular}[t]{r}\sl j =\,20\\\hline
11\,388
\\
224
\\
\end{tabular}\allowbreak
\begin{tabular}[t]{r}\sl j =\,21\\\hline
3\,044
\\
16
\\
\end{tabular}\allowbreak
\begin{tabular}[t]{r}\sl j =\,22\\\hline
680
\\
\end{tabular}\allowbreak
\begin{tabular}[t]{r}\sl j =\,23\\\hline
120
\\
\end{tabular}\allowbreak
\begin{tabular}[t]{r}\sl j =\,24\\\hline
15
\\
\end{tabular}\allowbreak
\begin{tabular}[t]{r}\sl j =\,25\\\hline
1
\\
\end{tabular}\allowbreak

}

\par\medskip
{\bf Brinkmann Graph}\ \par\nopagebreak
\begin{minipage}[t]{.2\textwidth}\vspace{0pt}
\includegraphics{graphs.10}
\end{minipage}
{\fontsize{5}{6}\selectfont \sf
\begin{tabular}[t]{r}\sl j =\,0\\\hline
0
\\
9\,135\,298
\\
49\,413\,533
\\
127\,008\,274
\\
208\,645\,102
\\
247\,964\,242
\\
228\,346\,378
\\
170\,148\,325
\\
105\,629\,121
\\
55\,758\,397
\\
25\,384\,606
\\
10\,061\,144
\\
3\,489\,936
\\
1\,060\,656
\\
281\,455
\\
64\,604
\\
12\,601
\\
2\,024
\\
253
\\
22
\\
1
\\
\end{tabular}\allowbreak
\begin{tabular}[t]{r}\sl j =\,1\\\hline
9\,135\,298
\\
81\,926\,895
\\
266\,495\,740
\\
489\,646\,682
\\
603\,283\,289
\\
545\,064\,597
\\
380\,867\,123
\\
212\,835\,902
\\
97\,136\,821
\\
36\,640\,429
\\
11\,470\,466
\\
2\,969\,085
\\
626\,955
\\
105\,035
\\
13\,239
\\
1\,127
\\
49
\\
\end{tabular}\allowbreak
\begin{tabular}[t]{r}\sl j =\,2\\\hline
41\,648\,660
\\
239\,055\,379
\\
573\,449\,072
\\
809\,702\,257
\\
776\,194\,328
\\
545\,163\,829
\\
293\,316\,401
\\
124\,062\,007
\\
41\,782\,196
\\
11\,223\,611
\\
2\,380\,728
\\
388\,661
\\
46\,361
\\
3\,626
\\
140
\\
\end{tabular}\allowbreak
\begin{tabular}[t]{r}\sl j =\,3\\\hline
91\,803\,040
\\
397\,141\,486
\\
748\,656\,482
\\
842\,806\,153
\\
645\,651\,909
\\
360\,374\,231
\\
152\,129\,831
\\
49\,440\,398
\\
12\,395\,327
\\
2\,363\,508
\\
330\,351
\\
31\,304
\\
1\,680
\\
28
\\
\end{tabular}\allowbreak
\begin{tabular}[t]{r}\sl j =\,4\\\hline
134\,604\,309
\\
465\,831\,800
\\
714\,722\,253
\\
658\,050\,897
\\
410\,955\,629
\\
184\,948\,316
\\
61\,696\,341
\\
15\,337\,490
\\
2\,795\,009
\\
356\,678
\\
28\,840
\\
1\,155
\\
7
\\
\end{tabular}\allowbreak
\begin{tabular}[t]{r}\sl j =\,5\\\hline
151\,187\,372
\\
432\,446\,574
\\
551\,889\,933
\\
421\,620\,731
\\
216\,136\,928
\\
78\,214\,626
\\
20\,281\,751
\\
3\,715\,740
\\
457\,702
\\
33\,824
\\
1\,113
\\
\end{tabular}\allowbreak
\begin{tabular}[t]{r}\sl j =\,6\\\hline
140\,741\,055
\\
338\,941\,057
\\
363\,367\,900
\\
230\,681\,654
\\
96\,269\,502
\\
27\,429\,213
\\
5\,314\,498
\\
668\,668
\\
48\,664
\\
1\,505
\\
\end{tabular}\allowbreak
\begin{tabular}[t]{r}\sl j =\,7\\\hline
113\,681\,473
\\
232\,203\,883
\\
208\,849\,858
\\
109\,043\,291
\\
36\,267\,791
\\
7\,843\,003
\\
1\,066\,233
\\
81\,926
\\
2\,667
\\
7
\\
\end{tabular}\allowbreak
\begin{tabular}[t]{r}\sl j =\,8\\\hline
81\,746\,167
\\
141\,329\,485
\\
105\,525\,301
\\
44\,426\,158
\\
11\,403\,399
\\
1\,773\,968
\\
153\,321
\\
5\,749
\\
28
\\
\end{tabular}\allowbreak
\begin{tabular}[t]{r}\sl j =\,9\\\hline
53\,017\,571
\\
76\,846\,499
\\
46\,789\,365
\\
15\,447\,215
\\
2\,926\,959
\\
302\,344
\\
13\,951
\\
140
\\
\end{tabular}\allowbreak
\begin{tabular}[t]{r}\sl j =\,10\\\hline
31\,181\,857
\\
37\,308\,376
\\
18\,078\,277
\\
4\,510\,996
\\
592\,536
\\
35\,602
\\
588
\\
\end{tabular}\allowbreak
\begin{tabular}[t]{r}\sl j =\,11\\\hline
16\,645\,377
\\
16\,100\,987
\\
6\,016\,780
\\
1\,080\,072
\\
89\,355
\\
2\,401
\\
\end{tabular}\allowbreak
\begin{tabular}[t]{r}\sl j =\,12\\\hline
8\,048\,376
\\
6\,129\,326
\\
1\,695\,687
\\
204\,344
\\
9\,009
\\
49
\\
\end{tabular}\allowbreak
\begin{tabular}[t]{r}\sl j =\,13\\\hline
3\,509\,821
\\
2\,034\,949
\\
394\,632
\\
28\,756
\\
462
\\
\end{tabular}\allowbreak
\begin{tabular}[t]{r}\sl j =\,14\\\hline
1\,371\,591
\\
579\,747
\\
72\,996
\\
2\,688
\\
\end{tabular}\allowbreak
\begin{tabular}[t]{r}\sl j =\,15\\\hline
476\,045
\\
138\,453
\\
10\,080
\\
126
\\
\end{tabular}\allowbreak
\begin{tabular}[t]{r}\sl j =\,16\\\hline
144\,970
\\
26\,754
\\
924
\\
\end{tabular}\allowbreak
\begin{tabular}[t]{r}\sl j =\,17\\\hline
38\,094
\\
3\,948
\\
42
\\
\end{tabular}\allowbreak
\begin{tabular}[t]{r}\sl j =\,18\\\hline
8\,435
\\
399
\\
\end{tabular}\allowbreak
\begin{tabular}[t]{r}\sl j =\,19\\\hline
1\,519
\\
21
\\
\end{tabular}\allowbreak
\begin{tabular}[t]{r}\sl j =\,20\\\hline
210
\\
\end{tabular}\allowbreak
\begin{tabular}[t]{r}\sl j =\,21\\\hline
20
\\
\end{tabular}\allowbreak
\begin{tabular}[t]{r}\sl j =\,22\\\hline
1
\\
\end{tabular}\allowbreak

}

\par\medskip
{\bf McGee Graph}\ \par\nopagebreak
\begin{minipage}[t]{.2\textwidth}\vspace{0pt}
\includegraphics{graphs.6}
\end{minipage}
{\fontsize{5}{6}\selectfont \sf
\begin{tabular}[t]{r}\sl j =\,0\\\hline
0
\\
100\,424
\\
616\,320
\\
1\,853\,724
\\
3\,683\,515
\\
5\,484\,441
\\
6\,563\,798
\\
6\,600\,622
\\
5\,745\,907
\\
4\,420\,661
\\
3\,050\,680
\\
1\,908\,584
\\
1\,090\,666
\\
572\,080
\\
276\,100
\\
122\,600
\\
49\,938
\\
18\,532
\\
6\,188
\\
1\,820
\\
455
\\
91
\\
13
\\
1
\\
\end{tabular}\allowbreak
\begin{tabular}[t]{r}\sl j =\,1\\\hline
100\,424
\\
863\,904
\\
2\,984\,380
\\
6\,149\,836
\\
8\,896\,534
\\
9\,867\,514
\\
8\,854\,364
\\
6\,650\,972
\\
4\,272\,590
\\
2\,377\,190
\\
1\,152\,488
\\
486\,960
\\
178\,182
\\
55\,608
\\
14\,380
\\
2\,920
\\
418
\\
32
\\
\end{tabular}\allowbreak
\begin{tabular}[t]{r}\sl j =\,2\\\hline
348\,008
\\
1\,945\,060
\\
4\,833\,738
\\
7\,421\,464
\\
8\,101\,789
\\
6\,792\,829
\\
4\,577\,890
\\
2\,543\,854
\\
1\,178\,731
\\
455\,693
\\
145\,600
\\
37\,584
\\
7\,488
\\
1\,048
\\
80
\\
\end{tabular}\allowbreak
\begin{tabular}[t]{r}\sl j =\,3\\\hline
546\,092
\\
2\,252\,476
\\
4\,237\,698
\\
4\,969\,160
\\
4\,145\,382
\\
2\,639\,544
\\
1\,330\,356
\\
536\,856
\\
172\,060
\\
42\,648
\\
7\,728
\\
904
\\
48
\\
\end{tabular}\allowbreak
\begin{tabular}[t]{r}\sl j =\,4\\\hline
537\,899
\\
1\,697\,518
\\
2\,455\,880
\\
2\,208\,876
\\
1\,401\,096
\\
666\,144
\\
241\,640
\\
65\,472
\\
12\,502
\\
1\,488
\\
80
\\
\end{tabular}\allowbreak
\begin{tabular}[t]{r}\sl j =\,5\\\hline
382\,951
\\
930\,400
\\
1\,027\,312
\\
694\,880
\\
323\,908
\\
108\,612
\\
25\,476
\\
3\,720
\\
258
\\
\end{tabular}\allowbreak
\begin{tabular}[t]{r}\sl j =\,6\\\hline
210\,826
\\
387\,550
\\
316\,166
\\
153\,316
\\
48\,978
\\
10\,308
\\
1\,188
\\
32
\\
\end{tabular}\allowbreak
\begin{tabular}[t]{r}\sl j =\,7\\\hline
92\,060
\\
122\,924
\\
69\,834
\\
22\,388
\\
4\,326
\\
432
\\
\end{tabular}\allowbreak
\begin{tabular}[t]{r}\sl j =\,8\\\hline
31\,878
\\
28\,908
\\
10\,404
\\
1\,920
\\
168
\\
\end{tabular}\allowbreak
\begin{tabular}[t]{r}\sl j =\,9\\\hline
8\,602
\\
4\,764
\\
924
\\
72
\\
\end{tabular}\allowbreak
\begin{tabular}[t]{r}\sl j =\,10\\\hline
1\,748
\\
492
\\
36
\\
\end{tabular}\allowbreak
\begin{tabular}[t]{r}\sl j =\,11\\\hline
252
\\
24
\\
\end{tabular}\allowbreak
\begin{tabular}[t]{r}\sl j =\,12\\\hline
23
\\
\end{tabular}\allowbreak
\begin{tabular}[t]{r}\sl j =\,13\\\hline
1
\\
\end{tabular}\allowbreak

}

\par\medskip
{\bf Flower Snark}\ \par\nopagebreak
\begin{minipage}[t]{.2\textwidth}\vspace{0pt}
\includegraphics{graphs.4}
\end{minipage}
{\fontsize{5}{6}\selectfont \sf
\begin{tabular}[t]{r}\sl j =\,0\\\hline
0
\\
7\,878
\\
43\,135
\\
114\,690
\\
200\,340
\\
261\,282
\\
273\,073
\\
239\,007
\\
180\,402
\\
119\,792
\\
70\,904
\\
37\,697
\\
18\,052
\\
7\,767
\\
2\,977
\\
1\,000
\\
286
\\
66
\\
11
\\
1
\\
\end{tabular}\allowbreak
\begin{tabular}[t]{r}\sl j =\,1\\\hline
7\,878
\\
61\,874
\\
187\,515
\\
332\,265
\\
407\,935
\\
379\,816
\\
282\,743
\\
173\,800
\\
89\,925
\\
39\,505
\\
14\,713
\\
4\,580
\\
1\,150
\\
215
\\
25
\\
1
\\
\end{tabular}\allowbreak
\begin{tabular}[t]{r}\sl j =\,2\\\hline
26\,617
\\
129\,158
\\
268\,795
\\
336\,780
\\
293\,485
\\
191\,744
\\
97\,651
\\
39\,500
\\
12\,730
\\
3\,220
\\
615
\\
80
\\
5
\\
\end{tabular}\allowbreak
\begin{tabular}[t]{r}\sl j =\,3\\\hline
39\,815
\\
134\,515
\\
198\,500
\\
176\,070
\\
107\,135
\\
47\,405
\\
15\,540
\\
3\,760
\\
640
\\
65
\\
\end{tabular}\allowbreak
\begin{tabular}[t]{r}\sl j =\,4\\\hline
36\,190
\\
86\,880
\\
90\,385
\\
55\,570
\\
22\,700
\\
6\,285
\\
1\,145
\\
125
\\
5
\\
\end{tabular}\allowbreak
\begin{tabular}[t]{r}\sl j =\,5\\\hline
22\,832
\\
38\,396
\\
27\,215
\\
10\,965
\\
2\,715
\\
369
\\
20
\\
\end{tabular}\allowbreak
\begin{tabular}[t]{r}\sl j =\,6\\\hline
10\,624
\\
11\,938
\\
5\,335
\\
1\,240
\\
140
\\
1
\\
\end{tabular}\allowbreak
\begin{tabular}[t]{r}\sl j =\,7\\\hline
3\,704
\\
2\,531
\\
610
\\
60
\\
\end{tabular}\allowbreak
\begin{tabular}[t]{r}\sl j =\,8\\\hline
950
\\
330
\\
30
\\
\end{tabular}\allowbreak
\begin{tabular}[t]{r}\sl j =\,9\\\hline
170
\\
20
\\
\end{tabular}\allowbreak
\begin{tabular}[t]{r}\sl j =\,10\\\hline
19
\\
\end{tabular}\allowbreak
\begin{tabular}[t]{r}\sl j =\,11\\\hline
1
\\
\end{tabular}\allowbreak

}

\par\medskip
{\bf Loupekine's First Snark}\ \par\nopagebreak
\begin{minipage}[t]{.2\textwidth}\vspace{0pt}
\includegraphics{graphs.2}
\end{minipage}
{\fontsize{5}{6}\selectfont \sf
\begin{tabular}[t]{r}\sl j =\,0\\\hline
0
\\
20\,724
\\
121\,838
\\
350\,010
\\
663\,435
\\
941\,666
\\
1\,073\,720
\\
1\,028\,153
\\
852\,031
\\
623\,999
\\
409\,765
\\
243\,580
\\
131\,786
\\
65\,014
\\
29\,187
\\
11\,845
\\
4\,291
\\
1\,359
\\
364
\\
78
\\
12
\\
1
\\
\end{tabular}\allowbreak
\begin{tabular}[t]{r}\sl j =\,1\\\hline
20\,724
\\
176\,872
\\
591\,662
\\
1\,169\,865
\\
1\,617\,133
\\
1\,709\,810
\\
1\,460\,522
\\
1\,044\,783
\\
641\,221
\\
342\,967
\\
161\,263
\\
66\,826
\\
24\,328
\\
7\,703
\\
2\,073
\\
451
\\
71
\\
6
\\
\end{tabular}\allowbreak
\begin{tabular}[t]{r}\sl j =\,2\\\hline
75\,758
\\
412\,502
\\
981\,968
\\
1\,430\,045
\\
1\,470\,305
\\
1\,153\,310
\\
723\,379
\\
373\,894
\\
162\,202
\\
59\,554
\\
18\,481
\\
4\,810
\\
1\,050
\\
195
\\
30
\\
3
\\
\end{tabular}\allowbreak
\begin{tabular}[t]{r}\sl j =\,3\\\hline
124\,770
\\
490\,516
\\
862\,535
\\
930\,845
\\
703\,524
\\
398\,793
\\
176\,425
\\
62\,282
\\
17\,622
\\
3\,891
\\
603
\\
48
\\
\end{tabular}\allowbreak
\begin{tabular}[t]{r}\sl j =\,4\\\hline
126\,268
\\
367\,014
\\
475\,649
\\
373\,695
\\
201\,006
\\
78\,432
\\
22\,916
\\
5\,068
\\
825
\\
81
\\
\end{tabular}\allowbreak
\begin{tabular}[t]{r}\sl j =\,5\\\hline
89\,312
\\
190\,328
\\
177\,299
\\
97\,491
\\
35\,166
\\
8\,688
\\
1\,512
\\
189
\\
15
\\
\end{tabular}\allowbreak
\begin{tabular}[t]{r}\sl j =\,6\\\hline
46\,998
\\
71\,133
\\
45\,315
\\
16\,253
\\
3\,509
\\
465
\\
38
\\
3
\\
\end{tabular}\allowbreak
\begin{tabular}[t]{r}\sl j =\,7\\\hline
18\,864
\\
19\,116
\\
7\,638
\\
1\,568
\\
154
\\
6
\\
\end{tabular}\allowbreak
\begin{tabular}[t]{r}\sl j =\,8\\\hline
5\,766
\\
3\,540
\\
759
\\
66
\\
\end{tabular}\allowbreak
\begin{tabular}[t]{r}\sl j =\,9\\\hline
1\,309
\\
407
\\
33
\\
\end{tabular}\allowbreak
\begin{tabular}[t]{r}\sl j =\,10\\\hline
209
\\
22
\\
\end{tabular}\allowbreak
\begin{tabular}[t]{r}\sl j =\,11\\\hline
21
\\
\end{tabular}\allowbreak
\begin{tabular}[t]{r}\sl j =\,12\\\hline
1
\\
\end{tabular}\allowbreak

}

\par\medskip
{\bf Loupekine's Second Snark}\ \par\nopagebreak
\begin{minipage}[t]{.2\textwidth}\vspace{0pt}
\includegraphics{graphs.3}
\end{minipage}
{\fontsize{5}{6}\selectfont \sf
\begin{tabular}[t]{r}\sl j =\,0\\\hline
0
\\
21\,156
\\
124\,286
\\
356\,730
\\
675\,496
\\
957\,769
\\
1\,090\,933
\\
1\,043\,540
\\
863\,802
\\
631\,780
\\
414\,216
\\
245\,775
\\
132\,710
\\
65\,338
\\
29\,277
\\
11\,863
\\
4\,293
\\
1\,359
\\
364
\\
78
\\
12
\\
1
\\
\end{tabular}\allowbreak
\begin{tabular}[t]{r}\sl j =\,1\\\hline
21\,156
\\
180\,076
\\
601\,016
\\
1\,185\,628
\\
1\,635\,022
\\
1\,724\,581
\\
1\,469\,555
\\
1\,048\,408
\\
641\,304
\\
341\,544
\\
159\,732
\\
65\,772
\\
23\,776
\\
7\,475
\\
2\,001
\\
435
\\
69
\\
6
\\
\end{tabular}\allowbreak
\begin{tabular}[t]{r}\sl j =\,2\\\hline
76\,946
\\
417\,674
\\
991\,226
\\
1\,439\,086
\\
1\,475\,257
\\
1\,154\,068
\\
721\,869
\\
371\,796
\\
160\,464
\\
58\,494
\\
17\,991
\\
4\,641
\\
1\,008
\\
189
\\
30
\\
3
\\
\end{tabular}\allowbreak
\begin{tabular}[t]{r}\sl j =\,3\\\hline
126\,048
\\
493\,856
\\
865\,641
\\
931\,623
\\
702\,666
\\
397\,725
\\
175\,617
\\
61\,744
\\
17\,346
\\
3\,793
\\
585
\\
48
\\
\end{tabular}\allowbreak
\begin{tabular}[t]{r}\sl j =\,4\\\hline
126\,968
\\
367\,930
\\
475\,731
\\
373\,263
\\
200\,712
\\
78\,312
\\
22\,816
\\
5\,000
\\
801
\\
75
\\
\end{tabular}\allowbreak
\begin{tabular}[t]{r}\sl j =\,5\\\hline
89\,520
\\
190\,386
\\
177\,183
\\
97\,431
\\
35\,160
\\
8\,682
\\
1\,494
\\
183
\\
15
\\
\end{tabular}\allowbreak
\begin{tabular}[t]{r}\sl j =\,6\\\hline
47\,030
\\
71\,117
\\
45\,303
\\
16\,253
\\
3\,509
\\
465
\\
36
\\
3
\\
\end{tabular}\allowbreak
\begin{tabular}[t]{r}\sl j =\,7\\\hline
18\,866
\\
19\,114
\\
7\,638
\\
1\,568
\\
154
\\
6
\\
\end{tabular}\allowbreak
\begin{tabular}[t]{r}\sl j =\,8\\\hline
5\,766
\\
3\,540
\\
759
\\
66
\\
\end{tabular}\allowbreak
\begin{tabular}[t]{r}\sl j =\,9\\\hline
1\,309
\\
407
\\
33
\\
\end{tabular}\allowbreak
\begin{tabular}[t]{r}\sl j =\,10\\\hline
209
\\
22
\\
\end{tabular}\allowbreak
\begin{tabular}[t]{r}\sl j =\,11\\\hline
21
\\
\end{tabular}\allowbreak
\begin{tabular}[t]{r}\sl j =\,12\\\hline
1
\\
\end{tabular}\allowbreak

}

\par\medskip
{\bf Robertson Graph}\ \par\nopagebreak
\begin{minipage}[t]{.2\textwidth}\vspace{0pt}
\includegraphics{graphs.5}
\end{minipage}
{\fontsize{5}{6}\selectfont \sf
\begin{tabular}[t]{r}\sl j =\,0\\\hline
0
\\
1\,437\,372
\\
7\,246\,700
\\
17\,211\,692
\\
25\,936\,913
\\
28\,091\,119
\\
23\,425\,656
\\
15\,702\,294
\\
8\,704\,413
\\
4\,067\,425
\\
1\,622\,042
\\
555\,756
\\
163\,804
\\
41\,322
\\
8\,801
\\
1\,540
\\
210
\\
20
\\
1
\\
\end{tabular}\allowbreak
\begin{tabular}[t]{r}\sl j =\,1\\\hline
1\,437\,372
\\
12\,029\,428
\\
35\,805\,218
\\
59\,406\,320
\\
65\,327\,035
\\
52\,063\,835
\\
31\,670\,274
\\
15\,160\,966
\\
5\,805\,523
\\
1\,786\,531
\\
438\,574
\\
83\,920
\\
11\,910
\\
1\,128
\\
54
\\
\end{tabular}\allowbreak
\begin{tabular}[t]{r}\sl j =\,2\\\hline
6\,220\,100
\\
32\,563\,392
\\
69\,906\,011
\\
86\,949\,004
\\
72\,306\,448
\\
43\,319\,066
\\
19\,471\,383
\\
6\,691\,382
\\
1\,760\,310
\\
348\,285
\\
49\,497
\\
4\,552
\\
204
\\
\end{tabular}\allowbreak
\begin{tabular}[t]{r}\sl j =\,3\\\hline
12\,943\,266
\\
50\,017\,890
\\
82\,616\,390
\\
80\,075\,156
\\
51\,859\,116
\\
23\,931\,700
\\
8\,099\,628
\\
2\,015\,912
\\
360\,113
\\
43\,247
\\
2\,982
\\
72
\\
\end{tabular}\allowbreak
\begin{tabular}[t]{r}\sl j =\,4\\\hline
17\,896\,018
\\
54\,312\,237
\\
71\,636\,757
\\
55\,598\,473
\\
28\,598\,913
\\
10\,261\,301
\\
2\,594\,341
\\
449\,940
\\
49\,420
\\
2\,802
\\
36
\\
\end{tabular}\allowbreak
\begin{tabular}[t]{r}\sl j =\,5\\\hline
18\,984\,001
\\
46\,813\,703
\\
50\,385\,897
\\
31\,674\,483
\\
12\,922\,375
\\
3\,532\,549
\\
634\,476
\\
68\,992
\\
3\,652
\\
36
\\
\end{tabular}\allowbreak
\begin{tabular}[t]{r}\sl j =\,6\\\hline
16\,719\,144
\\
34\,091\,282
\\
30\,117\,376
\\
15\,229\,199
\\
4\,813\,247
\\
955\,120
\\
110\,940
\\
6\,150
\\
72
\\
\end{tabular}\allowbreak
\begin{tabular}[t]{r}\sl j =\,7\\\hline
12\,772\,526
\\
21\,611\,154
\\
15\,535\,199
\\
6\,178\,619
\\
1\,450\,990
\\
193\,386
\\
12\,294
\\
204
\\
\end{tabular}\allowbreak
\begin{tabular}[t]{r}\sl j =\,8\\\hline
8\,656\,420
\\
12\,056\,388
\\
6\,911\,373
\\
2\,087\,824
\\
342\,608
\\
26\,967
\\
642
\\
\end{tabular}\allowbreak
\begin{tabular}[t]{r}\sl j =\,9\\\hline
5\,255\,640
\\
5\,923\,384
\\
2\,628\,658
\\
574\,674
\\
59\,983
\\
2\,161
\\
\end{tabular}\allowbreak
\begin{tabular}[t]{r}\sl j =\,10\\\hline
2\,865\,546
\\
2\,549\,764
\\
841\,681
\\
124\,393
\\
7\,011
\\
54
\\
\end{tabular}\allowbreak
\begin{tabular}[t]{r}\sl j =\,11\\\hline
1\,400\,474
\\
952\,258
\\
221\,559
\\
19\,969
\\
418
\\
\end{tabular}\allowbreak
\begin{tabular}[t]{r}\sl j =\,12\\\hline
610\,470
\\
303\,943
\\
46\,208
\\
2\,128
\\
\end{tabular}\allowbreak
\begin{tabular}[t]{r}\sl j =\,13\\\hline
235\,467
\\
81\,073
\\
7\,182
\\
114
\\
\end{tabular}\allowbreak
\begin{tabular}[t]{r}\sl j =\,14\\\hline
79\,458
\\
17\,461
\\
741
\\
\end{tabular}\allowbreak
\begin{tabular}[t]{r}\sl j =\,15\\\hline
23\,085
\\
2\,869
\\
38
\\
\end{tabular}\allowbreak
\begin{tabular}[t]{r}\sl j =\,16\\\hline
5\,643
\\
323
\\
\end{tabular}\allowbreak
\begin{tabular}[t]{r}\sl j =\,17\\\hline
1\,121
\\
19
\\
\end{tabular}\allowbreak
\begin{tabular}[t]{r}\sl j =\,18\\\hline
171
\\
\end{tabular}\allowbreak
\begin{tabular}[t]{r}\sl j =\,19\\\hline
18
\\
\end{tabular}\allowbreak
\begin{tabular}[t]{r}\sl j =\,20\\\hline
1
\\
\end{tabular}\allowbreak

}

\par\medskip
{\bf Book(``huck'', 23, 0, 0, 0, 1, 1, 0)}\ \par\nopagebreak
\begin{minipage}[t]{.2\textwidth}\vspace{0pt}
\includegraphics{graphs.12}
\end{minipage}
{\fontsize{5}{6}\selectfont \sf
\begin{tabular}[t]{r}\sl j =\,0\\\hline
0
\\
7\,644\,119\,040
\\
58\,063\,454\,208
\\
208\,089\,907\,200
\\
468\,472\,356\,864
\\
743\,860\,850\,688
\\
886\,362\,588\,672
\\
823\,088\,010\,752
\\
610\,456\,680\,992
\\
367\,568\,054\,960
\\
181\,618\,268\,880
\\
74\,123\,982\,824
\\
25\,065\,464\,810
\\
7\,022\,616\,847
\\
1\,625\,058\,718
\\
308\,561\,221
\\
47\,562\,773
\\
5\,855\,899
\\
562\,021
\\
40\,503
\\
2\,061
\\
66
\\
1
\\
\end{tabular}\allowbreak
\begin{tabular}[t]{r}\sl j =\,1\\\hline
7\,644\,119\,040
\\
130\,942\,365\,696
\\
714\,446\,189\,568
\\
2\,122\,537\,340\,160
\\
4\,123\,092\,911\,040
\\
5\,729\,638\,717\,440
\\
5\,999\,274\,355\,776
\\
4\,889\,305\,364\,240
\\
3\,167\,538\,981\,020
\\
1\,653\,497\,026\,704
\\
701\,112\,177\,494
\\
242\,350\,495\,650
\\
68\,266\,261\,385
\\
15\,600\,933\,383
\\
2\,866\,517\,167
\\
417\,203\,235
\\
46\,993\,465
\\
3\,949\,677
\\
233\,095
\\
8\,615
\\
150
\\
\end{tabular}\allowbreak
\begin{tabular}[t]{r}\sl j =\,2\\\hline
80\,523\,030\,528
\\
911\,365\,748\,352
\\
4\,072\,619\,596\,896
\\
10\,443\,363\,681\,456
\\
17\,837\,338\,328\,976
\\
21\,934\,086\,338\,856
\\
20\,332\,420\,289\,546
\\
14\,624\,737\,351\,937
\\
8\,313\,631\,620\,163
\\
3\,776\,758\,163\,442
\\
1\,378\,712\,232\,797
\\
404\,732\,343\,621
\\
95\,166\,437\,501
\\
17\,760\,717\,539
\\
2\,590\,259\,291
\\
288\,051\,462
\\
23\,498\,332
\\
1\,318\,791
\\
45\,170
\\
704
\\
\end{tabular}\allowbreak
\begin{tabular}[t]{r}\sl j =\,3\\\hline
427\,469\,042\,304
\\
3\,881\,430\,693\,504
\\
14\,940\,911\,766\,816
\\
33\,814\,295\,549\,328
\\
51\,446\,928\,857\,016
\\
56\,496\,623\,747\,504
\\
46\,713\,317\,299\,566
\\
29\,857\,553\,469\,301
\\
14\,994\,835\,055\,266
\\
5\,971\,670\,244\,733
\\
1\,892\,644\,074\,081
\\
476\,705\,833\,342
\\
94\,815\,567\,757
\\
14\,716\,870\,862
\\
1\,749\,939\,461
\\
155\,118\,437
\\
9\,852\,285
\\
422\,829
\\
11\,144
\\
144
\\
\end{tabular}\allowbreak
\begin{tabular}[t]{r}\sl j =\,4\\\hline
1\,534\,015\,292\,832
\\
11\,932\,392\,081\,456
\\
40\,626\,720\,391\,896
\\
82\,361\,720\,634\,568
\\
112\,779\,195\,195\,632
\\
111\,536\,656\,897\,631
\\
82\,898\,207\,721\,802
\\
47\,452\,347\,271\,542
\\
21\,231\,473\,215\,833
\\
7\,483\,614\,951\,821
\\
2\,082\,708\,405\,703
\\
456\,394\,199\,836
\\
78\,141\,581\,921
\\
10\,314\,463\,582
\\
1\,028\,275\,760
\\
75\,041\,770
\\
3\,809\,990
\\
122\,002
\\
1\,895
\\
\end{tabular}\allowbreak
\begin{tabular}[t]{r}\sl j =\,5\\\hline
4\,208\,425\,319\,232
\\
28\,958\,733\,158\,880
\\
88\,689\,338\,750\,268
\\
162\,872\,521\,275\,352
\\
202\,515\,265\,428\,917
\\
181\,822\,039\,012\,474
\\
122\,436\,175\,778\,571
\\
63\,289\,339\,398\,041
\\
25\,460\,955\,613\,478
\\
8\,026\,773\,100\,789
\\
1\,985\,823\,978\,447
\\
384\,178\,991\,077
\\
57\,622\,694\,859
\\
6\,604\,985\,046
\\
565\,949\,893
\\
35\,042\,180
\\
1\,483\,706
\\
38\,785
\\
485
\\
\end{tabular}\allowbreak
\begin{tabular}[t]{r}\sl j =\,6\\\hline
9\,456\,659\,968\,488
\\
58\,691\,487\,844\,000
\\
163\,670\,254\,561\,170
\\
274\,803\,257\,750\,421
\\
312\,764\,825\,917\,711
\\
256\,878\,631\,331\,512
\\
157\,948\,712\,034\,296
\\
74\,347\,451\,684\,553
\\
27\,140\,739\,158\,980
\\
7\,732\,248\,143\,990
\\
1\,720\,587\,723\,236
\\
297\,808\,086\,966
\\
39\,723\,622\,606
\\
4\,020\,794\,424
\\
301\,529\,428
\\
16\,128\,290
\\
575\,231
\\
11\,796
\\
86
\\
\end{tabular}\allowbreak
\begin{tabular}[t]{r}\sl j =\,7\\\hline
18\,198\,074\,885\,356
\\
103\,243\,788\,227\,460
\\
264\,694\,234\,344\,442
\\
409\,580\,434\,157\,718
\\
429\,835\,626\,654\,435
\\
325\,285\,245\,146\,851
\\
183\,994\,333\,180\,127
\\
79\,489\,180\,483\,326
\\
26\,556\,188\,103\,022
\\
6\,900\,191\,730\,740
\\
1\,394\,731\,774\,739
\\
218\,231\,222\,093
\\
26\,157\,204\,332
\\
2\,360\,083\,348
\\
155\,871\,340
\\
7\,184\,647
\\
210\,107
\\
3\,005
\\
\end{tabular}\allowbreak
\begin{tabular}[t]{r}\sl j =\,8\\\hline
30\,940\,731\,400\,736
\\
162\,102\,328\,114\,667
\\
385\,158\,972\,944\,741
\\
553\,143\,680\,550\,286
\\
538\,846\,162\,737\,722
\\
378\,230\,381\,624\,169
\\
198\,149\,002\,209\,146
\\
79\,122\,703\,627\,690
\\
24\,368\,441\,736\,674
\\
5\,818\,100\,553\,716
\\
1\,076\,252\,268\,769
\\
153\,327\,979\,453
\\
16\,622\,431\,041
\\
1\,344\,595\,660
\\
78\,646\,816
\\
3\,151\,857
\\
77\,523
\\
862
\\
\end{tabular}\allowbreak
\begin{tabular}[t]{r}\sl j =\,9\\\hline
47\,560\,610\,835\,303
\\
231\,997\,488\,723\,635
\\
514\,385\,945\,146\,530
\\
689\,925\,324\,841\,686
\\
627\,608\,139\,462\,721
\\
411\,029\,358\,220\,644
\\
200\,617\,849\,105\,619
\\
74\,482\,208\,530\,196
\\
21\,270\,851\,688\,207
\\
4\,692\,871\,669\,682
\\
798\,621\,154\,805
\\
104\,075\,037\,503
\\
10\,246\,508\,473
\\
745\,813\,853
\\
38\,768\,901
\\
1\,352\,207
\\
27\,469
\\
206
\\
\end{tabular}\allowbreak
\begin{tabular}[t]{r}\sl j =\,10\\\hline
67\,267\,401\,905\,395
\\
307\,596\,180\,791\,466
\\
640\,213\,989\,471\,637
\\
806\,372\,976\,235\,758
\\
688\,582\,006\,779\,197
\\
422\,898\,675\,284\,303
\\
193\,255\,210\,084\,387
\\
67\,024\,385\,760\,956
\\
17\,826\,935\,293\,467
\\
3\,648\,698\,696\,018
\\
573\,141\,852\,737
\\
68\,506\,947\,878
\\
6\,137\,792\,945
\\
402\,557\,913
\\
18\,595\,383
\\
561\,013
\\
9\,090
\\
33
\\
\end{tabular}\allowbreak
\begin{tabular}[t]{r}\sl j =\,11\\\hline
88\,747\,842\,449\,432
\\
382\,634\,254\,447\,952
\\
751\,445\,805\,849\,377
\\
893\,029\,971\,410\,258
\\
719\,045\,082\,827\,230
\\
415\,871\,654\,487\,327
\\
178\,629\,421\,459\,750
\\
58\,078\,504\,339\,287
\\
14\,431\,482\,593\,462
\\
2\,747\,096\,198\,020
\\
399\,057\,840\,816
\\
43\,803\,197\,182
\\
3\,573\,570\,819
\\
211\,208\,919
\\
8\,660\,607
\\
224\,483
\\
2\,783
\\
\end{tabular}\allowbreak
\begin{tabular}[t]{r}\sl j =\,12\\\hline
110\,413\,663\,746\,802
\\
451\,056\,497\,992\,889
\\
839\,500\,833\,190\,661
\\
945\,126\,777\,109\,455
\\
720\,198\,797\,592\,595
\\
393\,582\,445\,653\,266
\\
159\,375\,433\,931\,029
\\
48\,701\,630\,168\,789
\\
11\,328\,241\,057\,760
\\
2\,008\,371\,282\,725
\\
270\,019\,505\,090
\\
27\,225\,629\,630
\\
2\,022\,552\,038
\\
107\,776\,845
\\
3\,932\,249
\\
88\,116
\\
859
\\
\end{tabular}\allowbreak
\begin{tabular}[t]{r}\sl j =\,13\\\hline
130\,666\,491\,664\,749
\\
507\,874\,432\,372\,429
\\
899\,138\,156\,074\,953
\\
962\,134\,122\,445\,120
\\
695\,912\,053\,117\,113
\\
360\,278\,181\,979\,195
\\
137\,832\,995\,815\,031
\\
39\,652\,159\,130\,588
\\
8\,643\,948\,300\,111
\\
1\,428\,133\,964\,932
\\
177\,714\,498\,469
\\
16\,453\,312\,517
\\
1\,112\,678\,888
\\
53\,497\,484
\\
1\,739\,850
\\
33\,483
\\
234
\\
\end{tabular}\allowbreak
\begin{tabular}[t]{r}\sl j =\,14\\\hline
148\,114\,503\,232\,862
\\
549\,640\,583\,945\,429
\\
928\,434\,404\,732\,381
\\
946\,802\,956\,316\,230
\\
651\,526\,551\,686\,197
\\
320\,142\,255\,061\,978
\\
115\,885\,591\,893\,630
\\
31\,417\,553\,021\,990
\\
6\,421\,813\,488\,934
\\
988\,695\,612\,555
\\
113\,804\,068\,059
\\
9\,665\,564\,617
\\
594\,641\,211
\\
25\,811\,941
\\
748\,818
\\
12\,227
\\
53
\\
\end{tabular}\allowbreak
\begin{tabular}[t]{r}\sl j =\,15\\\hline
161\,710\,933\,451\,019
\\
574\,575\,879\,264\,822
\\
928\,302\,660\,980\,981
\\
904\,076\,789\,608\,510
\\
592\,898\,453\,733\,529
\\
276\,885\,385\,561\,807
\\
94\,920\,219\,709\,713
\\
24\,262\,330\,193\,502
\\
4\,649\,813\,570\,718
\\
666\,682\,299\,140
\\
70\,899\,825\,376
\\
5\,515\,962\,845
\\
308\,435\,706
\\
12\,095\,111
\\
312\,758
\\
4\,231
\\
8
\\
\end{tabular}\allowbreak
\begin{tabular}[t]{r}\sl j =\,16\\\hline
170\,813\,927\,289\,507
\\
582\,448\,562\,263\,319
\\
901\,802\,324\,461\,765
\\
840\,089\,439\,442\,061
\\
525\,705\,387\,527\,193
\\
233\,534\,388\,363\,802
\\
75\,855\,471\,078\,622
\\
18\,280\,905\,313\,399
\\
3\,283\,163\,455\,432
\\
437\,890\,395\,312
\\
42\,952\,342\,270
\\
3\,055\,400\,762
\\
155\,158\,798
\\
5\,504\,691
\\
126\,758
\\
1\,375
\\
\end{tabular}\allowbreak
\begin{tabular}[t]{r}\sl j =\,17\\\hline
175\,183\,508\,018\,224
\\
574\,307\,823\,880\,181
\\
853\,408\,768\,453\,265
\\
761\,337\,884\,261\,647
\\
454\,990\,478\,155\,838
\\
192\,359\,454\,489\,345
\\
59\,206\,435\,569\,975
\\
13\,447\,888\,149\,463
\\
2\,261\,157\,582\,200
\\
280\,092\,719\,250
\\
25\,286\,305\,141
\\
1\,641\,278\,936
\\
75\,679\,090
\\
2\,437\,347
\\
50\,082
\\
428
\\
\end{tabular}\allowbreak
\begin{tabular}[t]{r}\sl j =\,18\\\hline
174\,936\,105\,230\,968
\\
552\,153\,562\,397\,919
\\
788\,339\,037\,061\,129
\\
674\,053\,687\,878\,178
\\
384\,901\,878\,667\,513
\\
154\,895\,824\,102\,503
\\
45\,166\,410\,690\,866
\\
9\,661\,849\,230\,137
\\
1\,518\,926\,901\,479
\\
174\,390\,514\,342
\\
14\,453\,004\,774
\\
854\,165\,572
\\
35\,781\,667
\\
1\,050\,538
\\
19\,202
\\
121
\\
\end{tabular}\allowbreak
\begin{tabular}[t]{r}\sl j =\,19\\\hline
170\,475\,040\,838\,235
\\
518\,597\,408\,034\,174
\\
711\,981\,542\,551\,881
\\
583\,766\,784\,630\,227
\\
318\,587\,168\,246\,233
\\
122\,026\,125\,706\,545
\\
33\,691\,771\,782\,446
\\
6\,780\,631\,931\,834
\\
994\,941\,569\,885
\\
105\,615\,119\,324
\\
8\,011\,736\,241
\\
430\,162\,989
\\
16\,390\,015
\\
440\,428
\\
7\,108
\\
31
\\
\end{tabular}\allowbreak
\begin{tabular}[t]{r}\sl j =\,20\\\hline
162\,410\,989\,874\,779
\\
476\,549\,124\,971\,777
\\
629\,449\,797\,206\,205
\\
495\,042\,677\,879\,502
\\
258\,204\,924\,355\,942
\\
94\,096\,567\,995\,260
\\
24\,580\,817\,533\,069
\\
4\,647\,801\,211\,901
\\
635\,198\,727\,558
\\
62\,160\,275\,509
\\
4\,301\,315\,057
\\
209\,314\,543
\\
7\,263\,816
\\
178\,942
\\
2\,503
\\
7
\\
\end{tabular}\allowbreak
\begin{tabular}[t]{r}\sl j =\,21\\\hline
151\,482\,203\,761\,928
\\
428\,948\,242\,539\,842
\\
545\,264\,254\,521\,678
\\
411\,367\,624\,748\,301
\\
205\,018\,507\,541\,185
\\
71\,045\,733\,735\,891
\\
17\,540\,994\,280\,576
\\
3\,110\,824\,903\,986
\\
394\,989\,616\,510
\\
35\,513\,089\,086
\\
2\,232\,878\,866
\\
98\,214\,601
\\
3\,106\,323
\\
69\,836
\\
808
\\
1
\\
\end{tabular}\allowbreak
\begin{tabular}[t]{r}\sl j =\,22\\\hline
138\,481\,043\,916\,189
\\
378\,551\,418\,825\,180
\\
463\,156\,138\,345\,255
\\
335\,152\,391\,009\,976
\\
159\,540\,234\,730\,431
\\
52\,529\,965\,418\,036
\\
12\,241\,838\,274\,716
\\
2\,032\,190\,992\,977
\\
239\,034\,177\,695
\\
19\,668\,089\,937
\\
1\,118\,570\,285
\\
44\,326\,408
\\
1\,276\,909
\\
25\,927
\\
231
\\
\end{tabular}\allowbreak
\begin{tabular}[t]{r}\sl j =\,23\\\hline
124\,191\,059\,706\,563
\\
327\,779\,450\,801\,827
\\
385\,979\,736\,157\,458
\\
267\,823\,161\,239\,724
\\
121\,698\,403\,452\,623
\\
38\,034\,479\,409\,317
\\
8\,353\,299\,039\,147
\\
1\,294\,966\,296\,360
\\
140\,635\,158\,743
\\
10\,542\,647\,994
\\
539\,500\,322
\\
19\,181\,752
\\
502\,053
\\
9\,067
\\
56
\\
\end{tabular}\allowbreak
\begin{tabular}[t]{r}\sl j =\,24\\\hline
109\,337\,116\,734\,795
\\
278\,622\,686\,027\,317
\\
315\,714\,271\,443\,996
\\
209\,968\,589\,993\,354
\\
91\,005\,190\,493\,607
\\
26\,964\,153\,070\,620
\\
5\,570\,757\,950\,696
\\
804\,342\,907\,429
\\
80\,349\,078\,122
\\
5\,459\,753\,151
\\
249\,848\,063
\\
7\,926\,985
\\
187\,522
\\
2\,949
\\
10
\\
\end{tabular}\allowbreak
\begin{tabular}[t]{r}\sl j =\,25\\\hline
94\,549\,658\,362\,427
\\
232\,599\,415\,134\,395
\\
253\,533\,343\,091\,830
\\
161\,514\,900\,952\,720
\\
66\,709\,745\,029\,275
\\
18\,711\,968\,201\,815
\\
3\,629\,009\,765\,916
\\
486\,565\,736\,128
\\
44\,519\,508\,330
\\
2\,726\,196\,170
\\
110\,746\,672
\\
3\,111\,957
\\
65\,933
\\
883
\\
1
\\
\end{tabular}\allowbreak
\begin{tabular}[t]{r}\sl j =\,26\\\hline
80\,342\,944\,707\,533
\\
190\,758\,701\,968\,316
\\
199\,919\,236\,976\,320
\\
121\,905\,869\,974\,532
\\
47\,926\,985\,451\,590
\\
12\,706\,067\,491\,372
\\
2\,307\,827\,423\,945
\\
286\,375\,736\,913
\\
23\,887\,371\,320
\\
1\,309\,543\,395
\\
46\,804\,725
\\
1\,152\,202
\\
21\,528
\\
239
\\
\end{tabular}\allowbreak
\begin{tabular}[t]{r}\sl j =\,27\\\hline
67\,106\,067\,181\,304
\\
153\,717\,226\,791\,019
\\
154\,800\,992\,492\,735
\\
90\,270\,537\,601\,981
\\
33\,737\,964\,980\,454
\\
8\,438\,244\,908\,275
\\
1\,431\,651\,832\,106
\\
163\,814\,154\,069
\\
12\,391\,822\,214
\\
603\,613\,124
\\
18\,772\,999
\\
398\,423
\\
6\,401
\\
56
\\
\end{tabular}\allowbreak
\begin{tabular}[t]{r}\sl j =\,28\\\hline
55\,104\,744\,824\,358
\\
121\,719\,136\,468\,979
\\
117\,698\,380\,405\,855
\\
65\,567\,617\,176\,755
\\
23\,261\,782\,819\,227
\\
5\,477\,591\,996\,969
\\
865\,610\,892\,679
\\
90\,962\,163\,938
\\
6\,204\,115\,029
\\
266\,217\,054
\\
7\,105\,935
\\
127\,018
\\
1\,684
\\
10
\\
\end{tabular}\allowbreak
\begin{tabular}[t]{r}\sl j =\,29\\\hline
44\,491\,423\,252\,883
\\
94\,708\,476\,115\,968
\\
87\,858\,171\,960\,160
\\
46\,701\,038\,037\,329
\\
15\,701\,786\,388\,350
\\
3\,473\,196\,092\,061
\\
509\,618\,740\,844
\\
48\,964\,445\,869
\\
2\,991\,921\,656
\\
111\,984\,386
\\
2\,521\,253
\\
36\,746
\\
380
\\
1
\\
\end{tabular}\allowbreak
\begin{tabular}[t]{r}\sl j =\,30\\\hline
35\,321\,028\,354\,267
\\
72\,405\,228\,468\,100
\\
64\,373\,528\,302\,495
\\
32\,605\,441\,633\,781
\\
10\,370\,366\,228\,563
\\
2\,149\,485\,588\,138
\\
291\,838\,357\,248
\\
25\,513\,055\,441
\\
1\,386\,738\,494
\\
44\,763\,707
\\
831\,602
\\
9\,466
\\
72
\\
\end{tabular}\allowbreak
\begin{tabular}[t]{r}\sl j =\,31\\\hline
27\,569\,839\,915\,839
\\
54\,377\,952\,292\,000
\\
46\,281\,448\,067\,004
\\
22\,303\,581\,464\,318
\\
6\,697\,239\,653\,980
\\
1\,297\,255\,575\,747
\\
162\,366\,310\,290
\\
12\,846\,282\,872
\\
616\,213\,381
\\
16\,929\,544
\\
252\,202
\\
2\,112
\\
11
\\
\end{tabular}\allowbreak
\begin{tabular}[t]{r}\sl j =\,32\\\hline
21\,155\,277\,285\,691
\\
40\,108\,145\,965\,420
\\
32\,636\,626\,427\,969
\\
14\,939\,530\,821\,672
\\
4\,226\,027\,636\,156
\\
762\,739\,720\,158
\\
87\,644\,734\,347
\\
6\,238\,791\,356
\\
261\,761\,786
\\
6\,025\,245
\\
69\,220
\\
386
\\
1
\\
\end{tabular}\allowbreak
\begin{tabular}[t]{r}\sl j =\,33\\\hline
15\,954\,847\,768\,088
\\
29\,043\,492\,273\,524
\\
22\,562\,646\,009\,720
\\
9\,792\,558\,415\,254
\\
2\,603\,398\,119\,525
\\
436\,426\,275\,281
\\
45\,833\,115\,884
\\
2\,915\,994\,723
\\
105\,936\,834
\\
2\,004\,377
\\
16\,805
\\
52
\\
\end{tabular}\allowbreak
\begin{tabular}[t]{r}\sl j =\,34\\\hline
11\,823\,018\,778\,121
\\
20\,638\,880\,703\,615
\\
15\,283\,132\,698\,005
\\
6\,276\,719\,028\,345
\\
1\,564\,280\,125\,341
\\
242\,715\,583\,856
\\
23\,180\,300\,406
\\
1\,308\,416\,720
\\
40\,682\,440
\\
617\,953
\\
3\,491
\\
4
\\
\end{tabular}\allowbreak
\begin{tabular}[t]{r}\sl j =\,35\\\hline
8\,605\,270\,288\,428
\\
14\,385\,457\,021\,771
\\
10\,136\,458\,047\,648
\\
3\,930\,854\,880\,338
\\
915\,803\,054\,713
\\
131\,021\,687\,166
\\
11\,316\,367\,286
\\
561\,964\,518
\\
14\,753\,318
\\
174\,625
\\
589
\\
\end{tabular}\allowbreak
\begin{tabular}[t]{r}\sl j =\,36\\\hline
6\,149\,017\,377\,510
\\
9\,828\,892\,336\,538
\\
6\,577\,897\,351\,181
\\
2\,403\,021\,466\,599
\\
521\,796\,689\,085
\\
68\,546\,193\,842
\\
5\,320\,896\,693
\\
230\,239\,072
\\
5\,022\,558
\\
44\,548
\\
73
\\
\end{tabular}\allowbreak
\begin{tabular}[t]{r}\sl j =\,37\\\hline
4\,311\,436\,549\,811
\\
6\,578\,623\,394\,399
\\
4\,173\,019\,367\,654
\\
1\,432\,512\,031\,399
\\
288\,971\,661\,588
\\
34\,695\,422\,122
\\
2\,403\,511\,898
\\
89\,612\,569
\\
1\,593\,264
\\
10\,031
\\
5
\\
\end{tabular}\allowbreak
\begin{tabular}[t]{r}\sl j =\,38\\\hline
2\,964\,474\,698\,957
\\
4\,310\,052\,558\,697
\\
2\,585\,645\,699\,396
\\
831\,790\,371\,315
\\
155\,326\,381\,067
\\
16\,957\,718\,111
\\
1\,039\,924\,955
\\
32\,969\,506
\\
466\,505
\\
1\,925
\\
\end{tabular}\allowbreak
\begin{tabular}[t]{r}\sl j =\,39\\\hline
1\,997\,470\,496\,058
\\
2\,761\,684\,207\,395
\\
1\,563\,111\,427\,998
\\
469\,846\,262\,712
\\
80\,906\,285\,732
\\
7\,985\,680\,153
\\
429\,469\,516
\\
11\,395\,789
\\
124\,530
\\
297
\\
\end{tabular}\allowbreak
\begin{tabular}[t]{r}\sl j =\,40\\\hline
1\,317\,887\,697\,673
\\
1\,728\,990\,256\,047
\\
920\,894\,742\,411
\\
257\,819\,091\,781
\\
40\,765\,822\,575
\\
3\,614\,180\,128
\\
168\,586\,789
\\
3\,672\,196
\\
29\,815
\\
33
\\
\end{tabular}\allowbreak
\begin{tabular}[t]{r}\sl j =\,41\\\hline
850\,665\,829\,404
\\
1\,056\,510\,698\,738
\\
528\,044\,208\,920
\\
137\,219\,085\,706
\\
19\,829\,960\,250
\\
1\,567\,456\,613
\\
62\,586\,269
\\
1\,092\,398
\\
6\,261
\\
2
\\
\end{tabular}\allowbreak
\begin{tabular}[t]{r}\sl j =\,42\\\hline
536\,654\,329\,284
\\
629\,363\,523\,349
\\
294\,272\,153\,267
\\
70\,712\,799\,890
\\
9\,291\,403\,735
\\
649\,222\,792
\\
21\,837\,068
\\
296\,134
\\
1\,117
\\
\end{tabular}\allowbreak
\begin{tabular}[t]{r}\sl j =\,43\\\hline
330\,528\,626\,041
\\
365\,006\,585\,603
\\
159\,130\,489\,574
\\
35\,214\,172\,506
\\
4\,182\,750\,513
\\
255\,780\,213
\\
7\,105\,191
\\
71\,881
\\
161
\\
\end{tabular}\allowbreak
\begin{tabular}[t]{r}\sl j =\,44\\\hline
198\,505\,264\,692
\\
205\,792\,941\,637
\\
83\,350\,267\,190
\\
16\,908\,923\,264
\\
1\,803\,797\,937
\\
95\,399\,665
\\
2\,134\,376
\\
15\,238
\\
17
\\
\end{tabular}\allowbreak
\begin{tabular}[t]{r}\sl j =\,45\\\hline
116\,089\,433\,466
\\
112\,609\,389\,066
\\
42\,202\,405\,360
\\
7\,809\,196\,055
\\
742\,649\,893
\\
33\,492\,231
\\
584\,182
\\
2\,717
\\
1
\\
\end{tabular}\allowbreak
\begin{tabular}[t]{r}\sl j =\,46\\\hline
66\,010\,296\,070
\\
59\,693\,833\,147
\\
20\,608\,991\,452
\\
3\,458\,946\,792
\\
290\,757\,073
\\
10\,990\,747
\\
143\,089
\\
383
\\
\end{tabular}\allowbreak
\begin{tabular}[t]{r}\sl j =\,47\\\hline
36\,432\,673\,762
\\
30\,590\,692\,119
\\
9\,681\,466\,532
\\
1\,464\,504\,260
\\
107\,745\,798
\\
3\,342\,402
\\
30\,575
\\
38
\\
\end{tabular}\allowbreak
\begin{tabular}[t]{r}\sl j =\,48\\\hline
19\,480\,581\,012
\\
15\,119\,155\,354
\\
4\,362\,140\,777
\\
590\,430\,015
\\
37\,582\,223
\\
931\,895
\\
5\,485
\\
2
\\
\end{tabular}\allowbreak
\begin{tabular}[t]{r}\sl j =\,49\\\hline
10\,069\,527\,361
\\
7\,187\,420\,893
\\
1\,878\,612\,930
\\
225\,629\,513
\\
12\,256\,579
\\
234\,974
\\
776
\\
\end{tabular}\allowbreak
\begin{tabular}[t]{r}\sl j =\,50\\\hline
5\,019\,400\,819
\\
3\,276\,272\,681
\\
770\,201\,912
\\
81\,283\,454
\\
3\,706\,816
\\
52\,641
\\
77
\\
\end{tabular}\allowbreak
\begin{tabular}[t]{r}\sl j =\,51\\\hline
2\,406\,133\,696
\\
1\,426\,882\,777
\\
299\,178\,401
\\
27\,422\,357
\\
1\,029\,033
\\
10\,232
\\
4
\\
\end{tabular}\allowbreak
\begin{tabular}[t]{r}\sl j =\,52\\\hline
1\,105\,659\,524
\\
591\,250\,737
\\
109\,478\,461
\\
8\,592\,757
\\
258\,783
\\
1\,668
\\
\end{tabular}\allowbreak
\begin{tabular}[t]{r}\sl j =\,53\\\hline
485\,226\,772
\\
231\,936\,330
\\
37\,477\,493
\\
2\,474\,897
\\
57\,926
\\
216
\\
\end{tabular}\allowbreak
\begin{tabular}[t]{r}\sl j =\,54\\\hline
202\,490\,441
\\
85\,622\,599
\\
11\,898\,600
\\
646\,355
\\
11\,258
\\
20
\\
\end{tabular}\allowbreak
\begin{tabular}[t]{r}\sl j =\,55\\\hline
79\,941\,549
\\
29\,531\,012
\\
3\,465\,267
\\
150\,277
\\
1\,830
\\
1
\\
\end{tabular}\allowbreak
\begin{tabular}[t]{r}\sl j =\,56\\\hline
29\,674\,484
\\
9\,430\,560
\\
912\,617
\\
30\,308
\\
234
\\
\end{tabular}\allowbreak
\begin{tabular}[t]{r}\sl j =\,57\\\hline
10\,280\,210
\\
2\,757\,016
\\
213\,215
\\
5\,101
\\
21
\\
\end{tabular}\allowbreak
\begin{tabular}[t]{r}\sl j =\,58\\\hline
3\,293\,359
\\
727\,123
\\
43\,020
\\
673
\\
1
\\
\end{tabular}\allowbreak
\begin{tabular}[t]{r}\sl j =\,59\\\hline
964\,438
\\
169\,643
\\
7\,205
\\
62
\\
\end{tabular}\allowbreak
\begin{tabular}[t]{r}\sl j =\,60\\\hline
254\,353
\\
34\,073
\\
940
\\
3
\\
\end{tabular}\allowbreak
\begin{tabular}[t]{r}\sl j =\,61\\\hline
59\,228
\\
5\,661
\\
85
\\
\end{tabular}\allowbreak
\begin{tabular}[t]{r}\sl j =\,62\\\hline
11\,848
\\
730
\\
4
\\
\end{tabular}\allowbreak
\begin{tabular}[t]{r}\sl j =\,63\\\hline
1\,956
\\
65
\\
\end{tabular}\allowbreak
\begin{tabular}[t]{r}\sl j =\,64\\\hline
250
\\
3
\\
\end{tabular}\allowbreak
\begin{tabular}[t]{r}\sl j =\,65\\\hline
22
\\
\end{tabular}\allowbreak
\begin{tabular}[t]{r}\sl j =\,66\\\hline
1
\\
\end{tabular}\allowbreak

}

%%%%%%%%%%%%%%%%%%%%%%%%%%%%%%%%%%%%%%%%%%%%%%%%%%%%%%%%%%% Document footer %%%

% Do not display author addresses at end

\makeatletter
\renewcommand{\@setaddresses}{}
\makeatother

%%%%%%%%%%%%%%%%%%%%%%%%%%%%%%%%%%%%%%%%%%%%%%%%%%%%%%%%%%%%% Document ends %%%


\begin{thebibliography}{XX}

\bibitem{Andr98}  A.~Andrzejak, \emph{An algorithm for the {T}utte 
  polynomials of graphs of bounded treewidth}, 
  Discrete Math.\ \textbf{190} (1998), 39--54.

\bibitem{Ann92}   J. D. Annan, \emph{The complexities of the 
  coefficients of the Tutte polynomial}, Discrete 
  Appl.\ Math.\ \textbf{57} (1995), 93--103.

\bibitem{Ant90}   M.~H.~G. Anthony, \emph{Computing chromatic polynomials}, 
  Ars Combinatoria \textbf{29} (1990), 216--220.


\bibitem{Bigg93}  N.~Biggs, \emph{Algebraic Graph Theory}, 2nd ed., 
  Cambridge University Press, 1993.

\bibitem{BH07}    A.~Bj\"orklund, T.~Husfeldt, \emph{Exact algorithms 
  for exact satisfiability and number of perfect matchings}, Algorithmica, 
  2007, doi:10.1007/s00453-007-9149-8.


\bibitem{BHKK07a} A.~Bj\"orklund, T.~Husfeldt, P.~Kaski, M.~Koivisto,
  {\em Fourier meets M\"obius: fast subset convolution}, Proceedings of 
  the 39th Annual ACM Symposium on Theory of Computing (San Diego, 
  CA, June 11--13, 2007), Association for Computing Machinery, 2007, 
  pp.~67--74.

\bibitem{BHKK08} A.~Bj\"orklund, T.~Husfeldt, P.~Kaski, M.~Koivisto,
  \emph{The Travelling Salesman Problem in bounded degree graphs},
  Proceedings of the 35th International Colloquium on 
Automata, Languages and Programming  (Reykjavik, Iceland, July 6--13,
  2008), to appear.

\bibitem{BHK07}   A.~Bj\"orklund, T.~Husfeldt, M.~Koivisto, \emph{Set 
  partitioning via inclusion--exclusion}, SIAM J.\ Computing, to appear.

\bibitem{BD07} M. Bl\"aser, H. Dell, \emph{Complexity of the cover
    polynomial}, Proceedings of the 34th International Colloquium on
  Automata, Languages and Programming (Wroclaw, Poland, July 9-13,
  2007), Lecture Notes in Computer Science 4596, 2007, pp.~801-812.


\bibitem{Boll98}  B.~Bollob\'as, \emph{Modern Graph Theory},
  Graduate Texts in Mathematics 184, Springer, 1998.

\bibitem{Buza80}  J.~A.~Buzacott, 
  \emph{A recursive algorithm for finding reliability measures related to the
  connection of nodes in a graph}, 
  Networks \textbf{10} (1980), 311--327.

\bibitem{Cayl89} A. Cayley, \emph{A theorem on trees}, 
  Quart.\ J.\ Math.\ \textbf{23} (1889), 376--378.

%\mynote{CFGS86 not cited in the text!}
%\bibitem{CFGS86}  F.~R.~K.~Chung, P.~Frankl, R.~L.~Graham, J.~B.~Shearer,
%  \emph{Some intersection theorems for ordered sets and graphs}.
%  J. Combin.\ Theory Ser.\ A {\bf 43} (1986), 23--37.

\bibitem{CG95} F.~R.~K.~Chung, R.~L.~Graham, \emph{On the cover
    polynomial of a digraph}, J.\ Combin.\ Theory Ser.~B \textbf{65}
  (1995), 273--290.

\bibitem{ChYa99}  F.~Chung, S.-T.~Yau, \emph{Coverings, heat kernels, 
  and spanning trees}, Electron.\ J.\ Combinatorics {\bf 6} (1999) 
  \#R12, 21 pp.

\bibitem{FGS07} F.~V.~Fomin, S.~Gaspers, S.~Saurabh, \emph{Improved
  exact algorithms for counting 3- and 4-colorings}, Computing and
  Combinatorics, 13th Annual International Conference (COCOON), Banff,
  Canada, July 16--19, 2007, Lecture Notes in Computer Science 4598,
  Springer, 2007, pp.~65--74.

\bibitem{FK72} C.~M.~Fortuin, P.~W.~Kasteleyn, \emph{On the random-cluster
	model. I. Introduction and relation to other models}, 
	Physica {\bf 57} (1972), 536--564.

\bibitem{GO07} H.~Gebauer, Y.~Okamoto, \emph{Fast exponential-time
    algorithms for the forest counting in graph classes}, 
  Theory of Computing 2007, Proceedings of the 13th
  Computing: The Australasian Theory Symposium (CATS 2007),
  Ballarat, Victoria, Jan 30--Feb 2, 2007,
  Conferences in Research and Practice in Information Technology 65,
  Australian Computer Society, 2007, pp.~63--69.

\bibitem{GHN06}   O.~Gim{\'e}nez, P.~Hlin{\v e}n{\'y}, M.~Noy, 
  \emph{Computing the {T}utte polynomial on graphs of bounded clique-width}, 
  SIAM J. Discrete Math.\ \textbf{20} (2006), 932--946.

\bibitem{DGJ07} M.~Dyer, L.A.~Goldberg, and M.~Jerrum, \emph{The
    complexity of weighted Boolean \#CSP}, arXiv:0704.3683v1 [cs.CC]
  (Apr, 2007).

\bibitem{GR01}    C.~Godsil, G.~Royle, \emph{Algebraic Graph Theory}, 
  Graduate Texts in Mathematics 207, Springer, 2001.

\bibitem{GJ07}    L.~A.~Goldberg, M.~Jerrum, \emph{Inapproximability 
  of the {T}utte polynomial}, Proceedings of the 39th Annual ACM Symposium 
  on Theory of Computing (San Diego, CA, June 11--13, 2007), Association 
  for Computing Machinery, 2007, pp.~459--468.

\bibitem{HPR} G. Haggard, D. Pearce, G. Royle, \emph{Computing Tutte
    polynomials}, Technical Report, Victoria University of Wellington,
  NZ, in preparation.

\bibitem{Hli06}   P.~Hlin{\v e}n{\'y}, \emph{The {T}utte polynomial for 
  matroids of bounded branch-width}, Combin.\ Probab.\ Comput.\ 
  \textbf{15} (2006), 397--409.

\bibitem{Imai00}  H. Imai, \emph{Computing the invariant polynomials of
  graphs, networks, and matroids}, IEICE T.\ Inf.\ Syst.\ 
  \textbf{E93--D} (2000), 330--343.

\bibitem{IPZ01} R.~Impagliazzo, R.~Paturi, and F.~Zane, \emph{Which problems have
strongly exponential complexity?}, J.\ Comput.\ Syst.\ Sci.\
\textbf{63} (2001), 512--530.

\bibitem{JVW90}   F.~Jaeger, D.~L.~Vertigan, D.~J.~A.~Welsh, 
  \emph{On the computational complexity of the {J}ones and {T}utte 
  polynomials}, Math.\ Proc.\ Cambridge Philos.\ Soc.\ \textbf{108} 
  (1990), 35--53.

\bibitem{KR95}    S.~Kapoor, H.~Ramesh, \emph{Algorithms for 
  enumerating all spanning trees of undirected and weighted graphs}, 
  SIAM J.\ Comput.\ \textbf{24} (1995), 247--265.

\bibitem{Karp82}  R.~M.~Karp, \emph{Dynamic programming meets the principle of
	inclusion and exclusion}, Oper.\ Res.\ Lett.\ {\bf 1} (1982), 49--51. 

\bibitem{Kast61}  P.~W.~Kasteleyn, \emph{The statistics of dimers on 
  a lattice: I. The number of dimer arrangements on a quadratic lattice}, 
  Physica \textbf{27} (1961), 1209--1225.

\bibitem{SGB} D.~E.~Knuth, \emph{The Stanford GraphBase: A Platform
    for Combinatorial Computing}, Association for Computing Machinery, 1993.

\bibitem{KGK77} S.~Kohn, A.~Gottlieb, M.~Kohn, \emph{A generating function
	approach to the traveling salesman problem}, Proceedings of the 1977 Annual
	Conference (ACM'77), Association for Computing Machinery, 1977, pp.~294--300.


%\bibitem{Koi04}   M.~Koivisto, \emph{Sum--Product Algorithms for the 
%  Analysis of Genetic Risks}. Ph.D.~thesis, University of Helsinki, 2004.

\bibitem{Koi06}   M.~Koivisto, \emph{Optimal 2-constraint satisfaction via
  sum-product algorithms}, Inform.\ Process.\ Lett.\ 
  \textbf{98} (2006), 22--24.

\bibitem{KRS99}   W.~Kook, V.~Reiner, D.~Stanton, \emph{A convolution 
  formula for the {T}utte polynomial}, J.\ Combin.\ Theory Ser.\ B 
  \textbf{76} (1999), 297--300.

\bibitem{Law76}	E.~L.~Lawler, \emph{A note on the complexity of the chromatic number problem}, 
Inf.\ Process.\ Lett.\ \textbf{5} (1976), 66--67.

\bibitem{Nob98}   S.~D.~Noble, \emph{Evaluating the {T}utte polynomial 
  for graphs of bounded tree-width}, Combin.\ Probab.\ Comput.\ \textbf{7} 
  (1998), 307--321.

\bibitem{OW79}    J.~G.~Oxley, D.~J.~A.~Welsh, \emph{The {T}utte 
  polynomial and percolation}, Graph Theory and Related Topics (J.~A.~Bondy 
  and U.~S.~R.~Murty, Eds.), Academic Press, 1979, pp.~329--339.

\bibitem{Potts_1952} R.~B.~Potts, 
  \emph{Some generalized order-disorder transformations},
  Proceedings of the Cambridge Philosophical Society
  \textbf{48} (1952), 106--109.

\bibitem{SIT95}   K.~Sekine, H.~Imai, S.~Tani, \emph{Computing the 
  {T}utte polynomial of a graph of moderate size}, 
  Algorithms and Computation, 6th International Symposium (ISAAC '95),
  Cairns, Australia, December 4--6, 1995, Lecture Notes in 
  Computer Science 1004, Springer, 1995, pp.~224--233.

\bibitem{Soka05}   A.~D.~Sokal, \emph{The multivariate Tutte polynomial 
  (alias Potts model) for graphs and matroids}, Surveys in Combinatorics, 
  2005, London Mathematical Society Lecture Note Series 327,
  Cambridge University Press, 2005, pp.~173--226. 

\bibitem{Tra08}   P.~Traxler, \emph{The Time Complexity of Constraint Satisfaction},
  Proceedings of the 3rd International Workshop on Exact and
  Parameterized Computation (Victoria (BC), Canada, May 14--16, 2008),
  to appear.

\bibitem{Tut04}   W.~T.~Tutte, \emph{Graph-polynomials}, 
  Adv.\ Appl.\ Math.\ \textbf{32} (2004), 5--9.

\bibitem{Val79}   L.~G.~Valiant, \emph{The complexity of enumeration 
  and reliability problems},  SIAM J.\ Comput.\ \textbf{8} (1979), 410--421.

\bibitem{Welsh93} D.~J.~A.~Welsh, \emph{Complexity: Knots, Colourings 
  and Counting}, London Mathematical Society Lecture Note Series 186, 
  Cambridge University Press, 1993.

\bibitem{Wel99}   D.~J.~A.~Welsh, \emph{The Tutte polynomial}, Random
  Structures Algorithms \textbf{15} (1999), 210--228.

\bibitem{WM99}    D.~J.~A.~Welsh, C.~Merino, \emph{The Potts model 
  and the Tutte polynomial}, J.\ Math.\ Phys.\ \textbf{41} (2000), 
  1127--1152.

\bibitem{West01}  D.~B.~West, \emph{Introduction to Graph Theory},
  2nd ed., Prentice--Hall, 2001.

\bibitem{Wilf}    H.~S.~Wilf, \emph{Algorithms and Complexity}, 
  Prentice--Hall, 1986.

\bibitem{Wil04}   R.~Williams, \emph{A new algorithm for optimal
  constraint satisfaction and its implications}, Theoret.\ 
  Comput.\ Sci.\ \textbf{348} (2005), 357--365.

\end{thebibliography}
\end{document}